\documentclass[11pt, a4paper]{amsart}
\usepackage{amssymb, amsmath, amscd, bm, amsthm, pdfpages, setspace, hyperref, mathtools, subcaption, graphicx, enumerate, cite}

\setlength\topmargin{5pt}  
\setlength\oddsidemargin{-15pt}\setlength\evensidemargin{-15pt}
\setlength{\textwidth}{170mm}\setlength{\textheight}{230mm}
\headheight=10pt

\theoremstyle{plain}
\newtheorem{theorem}{Theorem}[section]
\newtheorem{proposition}[theorem]{Proposition}
\newtheorem{definition}[theorem]{Definition}

\newtheorem{assumption}[theorem]{Assumption}
\newtheorem{example}[theorem]{Example}
\newtheorem{lemma}[theorem]{Lemma}
\newtheorem{remark}[theorem]{Remark}

\newtheorem{algorithm}[theorem]{Algorithm}

\newcommand{\diag}{\mathrm{diag}}
\newcommand{\ave}{\mathrm{ave}}
\newcommand{\vertiii}[1]{{\vert\kern-0.25ex\vert\kern-0.25ex\vert #1 
		\vert\kern-0.25ex\vert\kern-0.25ex\vert}}
\newcommand{\bvertiii}[1]{{\big\vert\kern-0.25ex\big\vert\kern-0.25ex\big\vert #1 
		\big\vert\kern-0.25ex\big\vert\kern-0.25ex\big\vert}}

\def\@Rref#1{\hbox{\rm \ref{#1}}}
\def\Rref#1{\@Rref{#1}}

\theoremstyle{plain}

\begin{document}

\title[Self-triggered Consensus with Quantized Measurements]{Self-triggered Consensus of
	Multi-agent Systems with Quantized Relative State Measurements}

\thispagestyle{plain}

\author{Masashi Wakaiki}
\address{Graduate School of System Informatics, Kobe University, Nada, Kobe, Hyogo 657-8501, Japan}
 \email{wakaiki@ruby.kobe-u.ac.jp}
 \thanks{This work was supported by JSPS KAKENHI Grant Number JP20K14362.}

\begin{abstract}
This paper addresses the consensus problem of
first-order continuous-time multi-agent systems over 
undirected graphs.
Each agent samples relative state measurements
in a self-triggered fashion and transmits
the sum of the measurements to its neighbors.
Moreover, we use finite-level dynamic quantizers and apply
the zooming-in technique.
The proposed joint design method for quantization and self-triggered sampling
achieves asymptotic consensus, and inter-event times are strictly positive.
Sampling times are determined explicitly with iterative procedures
including the computation of the Lambert $W$-function.
A simulation example is provided to illustrate the effectiveness of
the proposed method.
\end{abstract}

\maketitle

\section{Introduction}
With the recent development of 
information and communication technologies,
multi-agent systems have received 
considerable attention.
Cooperative control of multi-agent systems can be applied to various
areas such as multi-vehicle formulation \cite{Fax2004} and 
distributed sensor networks \cite{Olfati-Saber2005}.
A basic coordination problem of multi-agent systems is consensus,
whose aim is to reach an agreement on the
states of all agents.
A theoretical framework
for consensus problems has been
introduced in the seminal work~\cite{Olfati-Saber2004}, and 
substantial progress has been made since then; see 
the survey papers~\cite{Olfati-Saber2007,Chen2019Survey} and the references therein.

In practice, digital devices are used in
multi-agent systems.
Conventional approaches to implementing digital platforms involve periodic sampling.
However, periodic sampling can lead to
unnecessary control updates and state measurements, 
which are undesirable for resource-constrained multi-agent systems.
Event-triggered control~\cite{Arzen1999, Tabuada2007, Heemels2008}
and self-triggered control~\cite{Velasco2003, Wang2009, Anta2010} are
promising alternatives to traditional periodic control.
In both event-triggered and self-triggered control systems,
data transmissions and control updates occur only when needed.
Event-triggering mechanisms use 
current measurements and check triggering conditions
continuously or periodically.
On the other hand, self-triggering mechanisms avoid
such frequent monitoring by calculating
the next sampling time when data are obtained.
Various methods have been developed  
for event-triggered consensus and self-triggered consensus; see, e.g., \cite{Dimarogonas2012, Seyboth2013, Fan2015, Yi2019, Liu2022SC}.
Comprehensive surveys on this topic are available in~\cite{Ding2018, Nowzari2019}.
Some specifically relevant studies are cited below.

The bandwidth of communication channels
and the accuracy of sensors may be limited in multi-agent systems.
In such situations, only imperfect information 
is available to the agents.
We also face the theoretical question of 
how much accuracy in information is necessary for consensus.
From both practical and theoretical point of view,
quantized consensus has been studied extensively.
For continuous-time multi-agent systems,
infinite-level static quantization is often considered
under the situation where quantized measurements are 
obtained continuously; see, e.g., \cite{Dimarogonas2010, Ceragioli2011, Liu2013, Guo2013, Wu2014IJRNC, Li2018, Xu2022}.
Event-triggering mechanisms and self-triggering mechanisms
have been proposed for continuous-time 
multi-agent systems with infinite-level static quantizers in 
\cite{Garcia2013IJC, Zhang2015, Zhang2016TCSII, Yi2016, Liu2016MAS,
	Wu2018TCSI, Dai2019, Li2021, Dai2022, Wang2023}.
Self-triggered consensus with ternary controllers
has been also studied in \cite{Persis2013, Matsume2021}.
For 
event-triggered consensus under unknown input delays,
finite-level dynamic quantizers have been
developed in \cite{Golestani2022}, where the quantization error goes to zero
as the agent state converges to the origin.

For discrete-time multi-agent systems, finite-level
dynamic quantizers  to achieve asymptotic consensus
have been designed in~\cite{Carli2010, Li2011, You2011, Li2014, Qui2016}.
This type of dynamic quantization has been also used for
periodic sampled-data consensus~\cite{Ma2018},
event-triggered consensus~\cite{Chen2017, Ma2018TSMCS, Yu2019, Lin2022,
	Lin2022ISA}, and consensus under denial-of-service attacks~\cite{Feng2022}.
Moreover, an event-triggered average consensus protocol 
has been proposed for 
discrete-time multi-agent systems with integer-valued states 
in \cite{Rikos2021}, and it has been extended to the
privacy-preserving case in \cite{Rikos2023}.

In this paper, we consider first-order continuous-time  multi-agent systems
over undirected graphs. Our goal is to jointly design
a finite-level dynamic quantizer
and  a self-triggering mechanism for asymptotic consensus.
We focus on the situation where \textit{relative} states, not \textit{absolute} states,
are sampled as, e.g., in~\cite{Liu2022SC, Dimarogonas2010, Guo2013, Liu2016MAS, Yi2016, Li2018, Wu2018TCSI, Dai2019,
	Dai2022}.
We assume that each agent's sensor
has a scaling parameter to adjust the maximum measurement
range and the accuracy.
For example, if indirect time-of-flight sensors~\cite{Horaud2016}  are installed in agents,
then the modulation frequency of light signals determines
the maximum range and the accuracy.
In the case of cameras, they can be changed by adjusting
the focal length; see Section~11.2 of \cite{Spong2020}
for a mathematical model of cameras.

In the proposed self-triggered framework,
the agents send the sum of the relative state measurements
to all their neighbors as
in the self-triggered consensus algorithm presented
in~\cite{Fan2015}.
In other words, each agent communicates with its neighbors only at 
the sampling times of itself and its neighbors.
The sum is transmitted so that the neighbors compute the next sampling times, not the inputs. After receiving it, 
the neighbors update the next sampling times.
Since the measurements are already quantized when they are sampled,
the sum can be transmitted without error, 
even over channels with finite capacity.

The main contributions of this paper are summarized as follows:

1. We 
propose a joint algorithm for finite-level dynamic quantization
and self-triggered  sampling of the relative states.
We also provide a sufficient condition for the
consensus of the quantized
self-triggered multi-agent system.
This sufficient condition represents a quantitative trade-off
between data accuracy and sampling frequency.
Such a trade-off can be a useful guideline for
sensing performance, power consumption, and channel capacity.

2. In the proposed method, the inter-event times, i.e., the 
sampling intervals of each agent, are strictly positive, and hence Zeno behavior does not occur.
In addition, the agents can compute sampling times using an
explicit formula with the Lambert $W$-function (see, e.g.,~\cite{Corless1996}
for the Lambert $W$-function). Consequently,
the proposed self-triggering mechanism is simple and efficient
in computation.

We now compare our results with previous studies. The
finite-level dynamic quantizers developed in \cite{Carli2010, Li2011, You2011, Li2014, Qui2016} and their aforementioned extensions
require the
\textit{absolute} states. More specifically, they quantize the error
between the absolute state and its estimate for 
communication over finite-capacity channels.
In this framework, 
the agents have to estimate the states of all their
neighbors for decoding.
In contrast, we develop finite-level dynamic quantizers for 
\textit{relative} state measurements.
As in the existing studies above,
we also employ the zooming-in technique
introduced for single-loop systems in \cite{Brockett2000, Liberzon2003Automatica}.
However, due to the above-mentioned difference in what is quantized,
the quantizer we study has several notable features. For example,
the proposed algorithm can be applied to
GPS-denied environments. Moreover,
the estimation of neighbor states is not needed,
which reduces the computational burden on the agents.

A finite-level quantizer may be saturated, i.e., it  does not 
guarantee the accuracy of quantized data in general if
the original data is outside of the quantization region.
To achieve asymptotic consensus,
we need to update the scaling parameter
of the quantizer so that
the relative state measurement is within the quantization region
and the quantization error  goes to zero  asymptotically.
In~\cite{Yi2016, Dai2019, Dai2022},
infinite-level static quantizers have been used  for
quantized self-triggered consensus of first-order multi-agent systems.
Hence the issue of quantizer saturation has not been addressed there.
In \cite{Yi2016},
infinite-level uniform quantization has been considered,
and consequently only consensus to a bounded region
around the average of the agent states has been
achieved.
The quantized self-triggered control algorithm 
proposed in \cite{Dai2019, Dai2022}
achieves asymptotic consensus with the help of 
infinite-level logarithmic quantizers, but
sampling times have to belong to the set $\{
t = kh: \text{$k$ is a nonnegative integer}
\}$ with some $h >0$, which makes it easy to exclude
Zeno behavior. 
Table~\ref{table:comparison} summarizes the comparison between
this study and several relevant studies.

The difficulty of  this study is that the following three 
conditions
must be satisfied:
\begin{itemize}
	\item avoiding quantizer saturation;
	\item decreasing the quantization error asymptotically; and
	\item guaranteeing that the inter-event times are strictly positive.
\end{itemize}
To address this difficulty, we  introduce a new
semi-norm $\vertiii{\cdot}_{\infty}$ for the analysis of multi-agent systems.
The semi-norm is constructed from the maximum norm and is suitable for
handling errors of individual agents due to quantization 
and self-triggered sampling.
Moreover, the Laplacian matrix 
$L\in \mathbb{R}^{N \times N}$ 
of the multi-agent system has
the following semi-contractivity property: There exists a constant $\gamma >0$ such that
\[
\vertiii{e^{-Lt} v}_{\infty} \leq e^{-\gamma t} \vertiii{v}_{\infty}
\]
for all $v \in \mathbb{R}^N$ and $t \geq 0$; 
see \cite{Jafarpour2022,Pasquale2022} for the semi-contraction theory.
The semi-contractivity property
facilitates the analysis of state trajectories under self-triggered sampling
and consequently leads to a simple design of the scaling parameter for finite-level
dynamic quantization.

\begin{table*}[b]
	\renewcommand{\arraystretch}{1.2}
	\caption{Comparison between 
		this study and several relevant studies.}
	\label{table:comparison}
	\centering
	\begin{tabular}{r|llll} 
		& Triggering mechanism & Measurement & Quantization & Agent dynamics \\ \hline 
		This study & Self-trigger & Relative state   & Finite-level \& dynamic &  First-order \\
		\cite{Golestani2022, Ma2018TSMCS, Yu2019, Lin2022,Lin2022ISA} & Event-trigger    & Absolute state  & Finite-level \& dynamic  & 	
		High-order \\
		\cite{Yi2016, Dai2019, Dai2022} & Self-trigger & Relative state & Infinite-level \& static & First-order 
	\end{tabular}
\end{table*}

The rest of this paper is organized as follows.
In Section~\ref{sec:sys_model}, we introduce the system model.
In Section~\ref{sec:preliminaries}, we provide
some preliminaries on the semi-norm and sampling times.
Section~\ref{sec:Analysis} contains the main result, which
gives a sufficient condition for consensus.
In Section~\ref{sec:self_triggered_computation}, we explain
how the agents compute sampling times in a self-triggered fashion.
A simulation example is given  in
Section~\ref{sec:numerical_sim}, and
Section~\ref{sec:conclusion} concludes this paper.

Notation: 
We denote
the set of nonnegative integers by $\mathbb{N}_0$.
We define $\inf \emptyset \coloneqq \infty$.
Let $M,N \in \mathbb{N}$. We denote the transpose of $A \in
\mathbb{R}^{M \times N}$ by $A^{\top}$.
For a vector $v \in \mathbb{R}^N$ with the $i$-th element $v_i$, 
its maximum norm is 
\[\|v\|_{\infty} \coloneqq 
\max \{
|v_1|,\dots,|v_N|
\},
\] and the corresponding induced norm of $A \in \mathbb{R}^{M \times N}$
with the \mbox{$(i,j)$-th} element $A_{ij}$ is given by
\[
\|A\|_{\infty} = \max \left\{ 
\sum_{j=1}^N |A_{ij}| : 1 \leq i \leq M
\right\}.
\]
When 
the eigenvalues $\lambda_1,\lambda_2,\cdots,\lambda_N \in \mathbb{R}$ of 
a symmetric matrix $P\in \mathbb{R}^{N\times N}$ with $N \geq 2$
satisfy $\lambda_1 \leq \lambda_2 \leq 
\cdots \leq \lambda_N$, we write $\lambda_2(P) \coloneqq \lambda_2$.
We define 
\[
\mathbf{1} \coloneqq 
\begin{bmatrix}
1 & 1 & \cdots & 1
\end{bmatrix}^{\top} \in \mathbb{R}^N,\qquad 
\bar{\mathbf{1}} \coloneqq \frac{1}{N} \mathbf{1}^{\top}
\]
and write $\ave(v) \coloneqq \bar{\mathbf{1}}v $ for $v \in \mathbb{R}^N$.
The graph Laplacian of an undirected graph $G$ is denoted by $L(G)$. 
We denote the Lambert $W$-function by $W(y)$ 
for $y \geq 0$.
In other words, $W(y)$ is the solution $x \geq 0$ of 
the transcendental equation $x e^x = y$.
Throughout this paper, we shall use  the following fact frequently 
without comment:
For $a,\omega > 0$ and $c \in \mathbb{R}$,
the solution $x=x^*$ of the transcendental equation 
$
a(x-c) = e^{-\omega x}
$
can be written as
\[
x^* = \frac{1}{\omega} W\left(
\frac{\omega e^{-\omega c}}{a}
\right) + c.
\]

\section{System Model}
\label{sec:sys_model}
\subsection{Multi-agent system}
\label{sec:MAS}
Let $N \in \mathbb{N}$ be $N \geq 2$, and
consider a multi-agent system with $N$ agents.
Each agent has a label $i \in \mathcal{N} \coloneqq 
\{1,2,\dots,N \}$.
For every $i \in \mathcal{N}$,
the dynamics of agent~$i$ is given by
\begin{equation}
\label{eq:each_agent_dynamics}
\dot x_i(t) = u_i(t),\quad t \geq 0; \qquad x_i(0) = x_{i0} \in \mathbb{R},
\end{equation}
where $x_i(t) \in \mathbb{R}$ and $u_i(t) \in \mathbb{R}$
are the state and the control input of agent~$i$, respectively.
The network topology of the multi-agent system
is given by a fixed undirected graph $G = (\mathcal{V}, \mathcal{E})$
with vertex set $\mathcal{V} = \{v_1,v_2,\dots,v_N \}$ and 
edge set 
\[\mathcal{E} \subseteq \{
(v_i,v_j) \in \mathcal{V} \times \mathcal{V}: i\not=j
\}.
\]
If $(v_i,v_j) \in \mathcal{E}$, then
agent~$j$ is called a \textit{neighbor} of agent~$i$, and
these two agents can measure the relative states and communicate with each other.
For $i \in \mathcal{N}$,
we denote by $\mathcal{N}_i$ the set of all neighbors of agent~$i$ and
by $d_i$ the degree of the node $v_i$, that is, the cardinality of the set $\mathcal{N}_i$.

Consider the ideal case without 
quantization or self-triggered sampling, and set
\begin{equation}
\label{eq:ideal_control_input}
u_i(t) = - \sum_{j \in \mathcal{N}_i} \big(x_i(t) - x_j(t) \big)
\end{equation}
for $t \geq 0$ and $i \in \mathcal{N}$.
It is well known that the multi-agent system to which the control input 
\eqref{eq:ideal_control_input} is applied achieves
average consensus under the following assumption.
\begin{assumption}
	\label{assump:connectivity}
	The undirected 
	graph $G$ is connected.
\end{assumption}

In this paper,
we place Assumption \ref{assump:connectivity}.
Moreover, we make two assumptions, which
are used to avoid the saturation of quantization schemes.
These assumptions are relative-state analogues of
the assumptions
in the previous studies  on quantized consensus
based on absolute state measurements (see, e.g., Assumptions~3 and 4 of \cite{Ma2018}).
\begin{assumption}
	\label{assump:initial_bound}
	A bound $E_0>0$ satisfying
	\[
	\left|x_{i0} - \frac{1}{N}\sum_{j \in \mathcal{N}} x_{j0} \right| \leq E_0\qquad 
	\text{for all~} i \in \mathcal{N}
	\] is known by all agents.
\end{assumption}

\begin{assumption}
	\label{assump:ni_bound}
	A bound $\tilde d \in \mathbb{N}$  satisfying
	\[
	d_i \leq \tilde d\qquad 
	\text{for all~} i \in \mathcal{N}
	\]
	is known by all agents.
\end{assumption}

We make an assumption on the number $R$
of quantization levels.
\begin{assumption}
	\label{assump:quantization_number}
	The number $R$ of quantization levels  is an odd number, i.e.,
	$R = 2R_0 + 1$ for some $R_0 \in \mathbb{N}_0$.
\end{assumption}

In this paper,
we study the following notion of consensus of
multi-agent systems under Assumption~\ref{assump:initial_bound}.
\begin{definition}
	\normalfont
	The multi-agent system \textit{achieves consensus exponentially 
		with decay rate $\omega >0$} under 
	Assumption~\ref{assump:initial_bound} if
	there exists a constant $\Omega>0$, independent
	of $E_0$, such that
	\begin{equation}
	\label{eq:consensus}
	|x_i(t) - x_j(t)| \leq \Omega
	E_0 e^{-\omega t} 
	\end{equation}
	for all $t \geq 0$ and $i,j \in \mathcal{N}$.
\end{definition}

\subsection{Quantization scheme}
\label{sec:Quantization_scheme}
Let $E>0$ be a quantization range and let $R \in \mathbb{N}$  be 
the number  of quantization levels satisfying Assumption~\ref{assump:quantization_number}.
We assume that $E$ and $R$ 
are shared 
among all agents. 
We apply uniform quantization to the interval $[-E,E]$. Namely,
a quantization function $Q_{E,R}$
is defined by
\begin{equation*}
Q_{E,R} [z]\coloneqq
\begin{cases}
\dfrac{2pE}{R}  & \text{if $\dfrac{(2p-1)E}{R}  < z \leq  \dfrac{(2p+1)E}{R} $}  \vspace{8pt}\\
0 & \text{if  $-\dfrac{E}{R} \leq z \leq \dfrac{E}{R}$} \vspace{8pt}\\
-Q_{E,R} [-z] &  \text{if $z < -\dfrac{E}{R} $}
\end{cases}
\end{equation*}
for $z \in [-E,E]$,
where $p\in \mathbb{N}$ and $p \leq R_0$.
By construction,
\[
\big |z - Q_{E,R}[z]
\big| \leq \frac{E}{R}
\]
for all $z \in [-E,E]$.
The agents use a  fixed  $R$
but change $E$ in order to 
achieve consensus asymptotically. 
In other words, $E$ is the scaling parameter of the quantization scheme.

Let
$\{t_k^i\}_{k\in \mathbb{N}_0}$ be a strictly increasing sequence with
$t_0^i \coloneqq 0$, and 
$t_k^i$ is the $k$-th sampling time of agent $i$.
To describe the quantized data used at time $t=t_k^i$
for  $k\in \mathbb{N}_0$,
we 
assume for the moment that a certain function $E:[0,\infty)
\to (0,\infty)$ satisfies
the unsaturation condition
\begin{equation}
\label{eq:E_bound}
|x_i(t_k^i) - x_j(t_k^i)| \leq E(t_k^i)
\quad 
\text{for all $j \in \mathcal{N}_i$}.
\end{equation}

Agent~$i$ measures the relative state
$x_i(t_k^i) - x_j(t_k^i) $ 
for each neighbor~$j \in \mathcal{N}_i$ 
and obtains its 
quantized value
\[
q_{ij}(t_k^i) \coloneqq Q_{E(t_k^i),R} [x_i(t_k^i) - x_j(t_k^i)  ].
\]
Then
agent~$i$ sends to each neighbor~$j \in \mathcal{N}_i$ 
the sum
\[
q_i(t_k^i) \coloneqq\sum_{j \in \mathcal{N}_i}q_{ij}(t_k^i).
\] 
The neighbors use the sum $q_i(t_k^i)$ to 
calculate the next sampling time, not the input.
This data transmission implies that
the agents use information not only about direct neighbors but also
about two-hop neighbors as in the self-triggering mechanism developed in \cite{Fan2015}.

The sum $q_i(t_k^i)$ consists of the quantized values, and therefore
agent~$i$ can transmit $q_i(t_k^i)$  without errors
even through finite-capacity channels.
In fact, since $R$ is an odd number under Assumption~\ref{assump:quantization_number}, 
the sum $q_i(t_k^i)$ belongs to the finite set
\[
\left\{
\frac{2pE(t_k^i)}{R}: p \in \mathbb{Z}\text{~~~and~~}-\tilde d R_0 \leq p \leq \tilde dR_0
\right\},
\]
where $\tilde d \in \mathbb{N}$ is as in Assumption~\ref{assump:ni_bound}.
The encoder of agent~$i$ assigns an index to each value $2pE/R$ and
transmits  
the index corresponding to the sum $q_i(t_k^i)$  to
the decoder of each neighbor~$j \in \mathcal{N}_i$.
Since the agents share $E$, $R$, and $\tilde d$,
the decoder can generate the sum $q_i(t_k^i)$ from the received index.

\subsection{Triggering mechanism}
Let a strictly increasing sequence
$\{t_k^i\}_{k\in \mathbb{N}_0}$ with $t_0^i \coloneqq 0$ be
the sampling times 
of agent~$i\in \mathcal{N}$ as in Section~\ref{sec:Quantization_scheme}, and
let $k \in \mathbb{N}_0$.
As in the ideal case \eqref{eq:ideal_control_input},
the control input $u_i(t)$ of agent~$i $ is given by
the sum of the quantized relative state,
\begin{equation}
\label{eq:each_agent_input}
u_i(t) = - q_{i}(t_k^i) = - \sum_{j \in \mathcal{N}_i} Q_{E(t_k^i),R} [x_i(t_k^i) - x_j(t_k^i)  ],
\end{equation}
for $t_k^i \leq t <t_{k+1}^{i}$ when the unsaturation condition
\eqref{eq:E_bound} is satisfied.
Then
the dynamics of agent~$i$ can be written as
\begin{align}
\label{eq:multi_agent_sys_ind}
\dot x_i(t) = - \sum_{j \in \mathcal{N}_i}
\big(x_i(t) - x_j(t)\big) + f_i(t)+g_i(t),
\end{align}
where $f_i(t)$ and $g_i(t)$ are, respectively, the errors due to sampling and 
quantization defined by
\begin{align}
f_i(t) &\coloneqq 
\sum_{j \in \mathcal{N}_i} \big(x_i(t) - x_j(t)\big) - 
\sum_{j \in \mathcal{N}_i} \big(x_i(t_k^i) - x_j(t_k^i)\big) \label{eq:fi_def} \\
g_i(t) &\coloneqq 
\sum_{j \in \mathcal{N}_i} \big(x_i(t_k^i) - x_j(t_k^i)\big) -
q_i(t_k^i)  \label{eq:gi_def} 
\end{align}
for $t_k^i \leq t < t_{k+1}^i$.

We make a triggering condition on the error $f_i$ due to sampling.
From the dynamics~\eqref{eq:each_agent_dynamics} and the input~\eqref{eq:each_agent_input}
of each agent, we have that
for all $t_k^i \leq t < t_{k+1}^i$, 
\begin{align}
x_i(t) - x_i(t_k^i) &= \int_{t_k^i}^{t}
u_i(s) ds =-(t-t_k^i) q_i(t_k^i) \label{eq:xi_diff}\\
x_j(t) - x_j(t_k^i) &=
\int_{t_k^i}^{t}
u_j(s) ds. \label{eq:xj_diff}
\end{align}
Substituting \eqref{eq:xi_diff} and \eqref{eq:xj_diff} into 
\eqref{eq:fi_def} motivates us to consider the following function obtained only from the inputs:
\begin{align}
\label{eq:fik_def}
f_k^i(\tau) 
&\coloneqq
\sum_{j \in \mathcal{N}_i} \int_{t_k^i}^{t_k^i+\tau}
\big( u_i(s) - u_j(s) \big) ds \notag \\
&= 
-\tau d_i q_i(t_k^i)- 
\sum_{j \in \mathcal{N}_i} \int_{t_k^i}^{t_k^i+\tau}
u_j(s) ds
\end{align}
for $\tau \geq 0$. Notice that 
\[
f_i(t_k^i + \tau) = f_k^i(\tau)
\]
for all $\tau \in [0,t_{k+1}^i - t_k^i)$.
Using the quantization range $E(t)$, we define
the $(k+1)$-th sampling time $t_{k+1}^i$
of agent~$i \in \mathcal{N}$ by
\begin{equation}
\label{eq:STM}
\left\{
\begin{alignedat}{4}
t_{k+1}^i &\coloneqq
t_k^i + \min \{\tau_k^i,\,\tau^i_{\max}  \}\\
\tau_k^i &\coloneqq 
\inf\{
\tau \geq \tau^i_{\min} :
|f_k^i(\tau)| \geq \delta_i E(t_k^i+\tau)
\},
\end{alignedat}
\right.
\end{equation}
where $\delta_i>0$ is a threshold and 
$\tau^i_{\max}, \tau^i_{\min} >0$ are upper and lower bounds of
inter-event times, respectively, i.e., $\tau^i_{\min} \leq \tau_k^i
\leq \tau^i_{\max}$.

The behaviors of the errors $f_i(t)$ and $g_i(t)$ 
can be roughly described as follows.
Under the triggering mechanism \eqref{eq:STM}, the error $|f_i(t)|$ due to sampling
is upper-bounded  by $\delta_i E(t)$.
The error $|g_i(t)|$ due to quantization is also 
bounded from above by a constant multiple of $E(t_k^i)$ for 
$t_k^i \leq t <t_{k+1}^{i}$ when
the quantizer is not saturated.
Hence,  if $E(t)$ decreases to zero as $t \to \infty$, then both errors $f_i(t)$ and 
$g_i(t)$ also go to zero.

After some preliminaries in Section~\ref{sec:preliminaries},
Section~\ref{sec:Analysis} is devoted to finding a quantization range $E(t)$, 
a threshold $\delta_i$, and
upper and lower bounds $\tau^i_{\max},\tau^i_{\min} $ of inter-event times such that  
consensus \eqref{eq:consensus} as well as the unsaturation condition 
\eqref{eq:E_bound} are satisfied. 
In Section~\ref{sec:self_triggered_computation}, we present
a method for agent~$i$ to compute 
the sampling times $\{t_k^i\}_{k\in \mathbb{N}_0}$
in a self-triggered fashion.

We conclude this section by making two remarks on the triggering mechanism~\eqref{eq:STM}. First,
the constraint $\tau_k^i  \geq \tau_{\min}^i$ is made solely to simplify the consensus analysis, and
agent $i$ can compute 
the sampling times
$\{t_k^i\}_{k\in \mathbb{N}_0}$
without using the lower bound $\tau^i_{\min}$.
Second,
continuous communication with the neighbors is not required to compute the sampling times, although the inputs of the 
neighbors are used 
in the triggering mechanism~\eqref{eq:STM}. It is enough for agent~$i$ to communicate with the neighbor~$j\in \mathcal{N}_i$
at their sampling times $\{t_k^i\}_{k\in \mathbb{N}_0}$ and $\{t_k^j\}_{k\in \mathbb{N}_0}$. 
In fact,
the inputs are piecewise-constant functions, and
agent~$i$ can know the input $u_j$ of the neighbor~$j $
from the received data $q_j(t_k^j)$. Based on the updated information on $q_j(t_k^j)$,
agent~$i$ recalculates the next sampling time.
We will discuss these issues in detail in Section~\ref{sec:self_triggered_computation}.

\section{Preliminaries}
\label{sec:preliminaries}
In this section, we 
introduce a semi-norm on $\mathbb{R}^N$ 
and basic properties of sampling times.
The reader eager to pursue the consensus analysis 
of multi-agent systems might skip detailed proofs
in this section and return to them when needed.

\subsection{Semi-norm for consensus analysis}

Inspired by the norm used in the theory of operator semigroups (see, e.g., the proof of Theorem~5.2 in Chapter~1 of 
\cite{Pazy1983}), we introduce a new semi-norm 
on $\mathbb{R}^N$, 
which will
lead to the semi-contractivity property \cite{Jafarpour2022,Pasquale2022} of 
the matrix exponential of the negative Laplacian matrix.
\begin{lemma}
	\label{lem:semi_norm}
	Let $\|\cdot\|$ be an arbitrary norm on $\mathbb{R}^N$, and let
	$L,F \in \mathbb{R}^{N\times N}$. Assume that $\Gamma>0$ and
	$\gamma \in \mathbb{R}$ satisfy
	\begin{align}
	\|e^{-Lt}(v - \ave(v) \mathbf{1}) \| &\leq \Gamma e^{-\gamma t} \|Fv\|
	\label{eq:Ptau1_prop}
	\end{align} 
	for all $v\in \mathbb{R}^N$ and $t \geq 0$.
	Then the function $\vertiii{\cdot}\colon \mathbb{R}^N \to [0,\infty)$
	defined by
	\begin{align*}
	\vertiii{v} 
	\coloneqq \sup_{t \geq  0} \|e^{\gamma t}e^{-Lt}
	(v - \ave(v)\mathbf{1})\|,\quad v\in \mathbb{R}^N,
	\end{align*}
	satisfies the following properties:
	\begin{enumerate}
		\renewcommand{\labelenumi}{\alph{enumi})}
		\item 
		For all $v\in \mathbb{R}^N$,
		\[
		\|v - \ave(v)\mathbf{1}\|
		\leq \vertiii{v} \leq  \Gamma \|Fv\|.
		\]
		\item 
		For all $v\in \mathbb{R}^N$,
		$\vertiii{v} = 0$ if and only if
		$v =  \ave(v)\mathbf{1}$.
		\item
		$\vertiii{\cdot}$ is a semi-norm on $\mathbb{R}^N$, i.e., for all
		$v,w \in \mathbb{R}^N$ and $\rho \in \mathbb{R}$,
		\[
		\vertiii{\rho v} = |\rho |~\!\vertiii{v},\quad 
		\vertiii{v+w} \leq \vertiii{v} + \vertiii{w}.
		\]
		\item
		If $L$
		satisfies $\bar{\mathbf{1}} L = 0$
		and $L\mathbf{1} = 0$, then 
		\[
		\vertiii{e^{-Lt}v} \leq e^{-\gamma t} \vertiii{v}
		\] 
		for all $v\in \mathbb{R}^N$ and $t \geq 0$.
	\end{enumerate}
\end{lemma}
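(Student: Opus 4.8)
The plan is to write $c(v) \coloneqq v - \ave(v)\mathbf{1}$ for the centering map, which is linear since $\ave(\cdot) = \bar{\mathbf{1}}(\cdot)$ is linear, so that $\vertiii{v} = \sup_{t \geq 0}\|e^{\gamma t}e^{-Lt}c(v)\|$. Properties (a)--(c) will then fall out of elementary manipulations, and the weight of the proof rests on (d), where the algebraic structure of $L$ enters. For (a), I would evaluate the summand at $t = 0$, where it equals $\|c(v)\|$, to obtain the lower bound $\|c(v)\| \leq \vertiii{v}$; the upper bound comes from multiplying the hypothesis \eqref{eq:Ptau1_prop} through by $e^{\gamma t}$ and taking the supremum over $t \geq 0$, which simultaneously shows $\vertiii{v} \leq \Gamma\|Fv\| < \infty$.

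Properties (b) and (c) are then immediate consequences of (a) and the definition. For (b), if $v = \ave(v)\mathbf{1}$ then $c(v) = 0$ and every summand vanishes, while conversely $\vertiii{v} = 0$ forces $\|c(v)\| = 0$ through the lower bound in (a). For (c), homogeneity follows from $c(\rho v) = \rho\, c(v)$ together with absolute homogeneity of $\|\cdot\|$, and the triangle inequality follows from $c(v + w) = c(v) + c(w)$, the triangle inequality for $\|\cdot\|$, and the subadditivity of the supremum.

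The core of the argument is (d), and the key step will be to show that the centering map commutes with the semigroup, that is, $c(e^{-Ls}v) = e^{-Ls}c(v)$ for every $s \geq 0$. I would prove this from the two hypotheses on $L$ separately: from $\bar{\mathbf{1}}L = 0$ one gets $\mathbf{1}^{\top}L^k = 0$ for all $k \geq 1$, hence $\bar{\mathbf{1}}e^{-Ls} = \bar{\mathbf{1}}$ and therefore $\ave(e^{-Ls}v) = \ave(v)$; from $L\mathbf{1} = 0$ one gets $e^{-Ls}\mathbf{1} = \mathbf{1}$, so that $\ave(v)\mathbf{1} = e^{-Ls}\big(\ave(v)\mathbf{1}\big)$. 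Combining the two yields $c(e^{-Ls}v) = e^{-Ls}v - \ave(v)\mathbf{1} = e^{-Ls}c(v)$.

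With the commutation identity in hand, (d) reduces to a shift argument in the spirit of the semigroup norm in \cite{Pazy1983}. I would write
\[
\vertiii{e^{-Ls}v} = \sup_{t \geq 0}\big\|e^{\gamma t}e^{-L(t+s)}c(v)\big\| = e^{-\gamma s}\sup_{r \geq s}\big\|e^{\gamma r}e^{-Lr}c(v)\big\|
\]
via the substitution $r = t + s$, and then bound the restricted supremum over $r \geq s$ by the full supremum over $r \geq 0$, which is exactly $\vertiii{v}$; this delivers $\vertiii{e^{-Ls}v} \leq e^{-\gamma s}\vertiii{v}$. The main obstacle is the commutation identity: everything else is bookkeeping, but that identity is precisely where both conditions $\bar{\mathbf{1}}L = 0$ and $L\mathbf{1} = 0$ are genuinely used, and it is what converts the definition of $\vertiii{\cdot}$ into the clean contraction estimate.
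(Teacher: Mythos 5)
Your proposal is correct and follows essentially the same route as the paper's proof: evaluate at $t=0$ and invoke \eqref{eq:Ptau1_prop} for part a), deduce b) and c) from linearity of the centering map and subadditivity of the supremum, and for d) establish that centering commutes with $e^{-Ls}$ via $\bar{\mathbf{1}}L=0$ and $L\mathbf{1}=0$, then apply the shift-of-supremum argument. The only nitpick is terminological: the quantity inside the supremum is not a \emph{summand}, but this does not affect the argument.
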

\begin{proof}
	Let $v,w\in \mathbb{R}^N$ and $\rho \in \mathbb{R}$ be given.

	a) By definition, we have
	\begin{align*}
	\vertiii{v} 
	\geq \|e^{\gamma 0}e^{-L0}
	(v - \ave(v)\mathbf{1})\| 
	= \|v - \ave(v)\mathbf{1}\|.
	\end{align*}
	The inequality
	\eqref{eq:Ptau1_prop}  yields
	\begin{align*}
	\|e^{\gamma t}e^{-Lt}
	(v - \ave(v)\mathbf{1})\|
	&\leq \Gamma \|Fv\| 
	\end{align*}
	for all
	$t \geq 0$. Hence,
	$
	\vertiii{v} 
	\leq \Gamma \|Fv\|.
	$
	
	b) This follows immediately from the definition of $\vertiii{\cdot}$.
	
	c)  We obtain
	\begin{align*}
	\vertiii{\rho v} 
	= |\rho|
	\sup_{t \geq 0} \|e^{\gamma t}e^{-Lt}
	(v - \ave(v)\mathbf{1})\| 
	= |\rho|~\!\vertiii{v}.
	\end{align*}
	Since
	$
	\ave(v+w) = \ave(v) + \ave(w),
	$
	it follows from
	the triangle inequality for the norm $\|\cdot\|$ that
	\begin{align*}
	\vertiii{v+w} 
	& \leq
	\sup_{t \geq 0} \big(\|e^{\gamma t}e^{-Lt}
	(v - \ave(v)\mathbf{1})\| + \|e^{\gamma t}e^{-Lt}
	(w - \ave(w)\mathbf{1})\| \big) \\
	&  \leq
	\sup_{t \geq 0} \|e^{\gamma t}e^{-Lt}
	(v - \ave(v)\mathbf{1})\| 
	+ 
	\sup_{t \geq 0} \|e^{\gamma t}e^{-Lt}
	(w - \ave(w)\mathbf{1})\| \\
	&  = \vertiii{v}+\vertiii{w}.
	\end{align*}

	d)
	By assumption, 
	\begin{align*}
	&\ave(e^{-Lt}v)\mathbf{1} = 
	(
	\bar{\mathbf{1}} e^{-Lt}v
	)\mathbf{1} = 
	(\bar{\mathbf{1}} v)\mathbf{1} \\
	&\quad=
	\ave(v)\mathbf{1}  =  \ave(v) (e^{-Lt}\mathbf{1}) = 
	e^{-Lt} (\ave(v) \mathbf{1}).
	\end{align*}
	This yields
	\begin{align*}
	\vertiii{e^{-Lt}v} 
	&=
	\sup_{s \geq 0} \|e^{\gamma s}e^{-L(s+t)}
	(v - \ave(v)\mathbf{1})\| \\
	&\leq e^{-\gamma t}
	\sup_{s \geq 0} \|e^{\gamma s}e^{-Ls}
	(v - \ave(v)\mathbf{1})\| \\
	&=  e^{-\gamma t} \vertiii{v}
	\end{align*}
	for all $t \geq 0$.
\end{proof}
\begin{remark}
	\normalfont
	If the inequality in 
	Lemma~\ref{lem:semi_norm}.d) is satisfied for some $\gamma >0$,
	then $e^{-Lt}$ is a semi-contraction with respect to the semi-norm $\vertiii{\cdot}$ 
	for all $t >0$.
	In Lemma~9 of \cite{Jafarpour2022}, a
	more general method is presented for constructing
	such semi-norms.
	The tuning parameter of this method is a matrix whose 
	kernel coincides with the span $\{\alpha \mathbf{1} : \alpha \in\mathbb{R}\}$.
	Since the constants $\Gamma$ and $\gamma$ in \eqref{eq:Ptau1_prop}
	are easier to tune for the joint design of 
	a quantizer and a self-triggering mechanism, 
	we will use
	Lemma~\ref{lem:semi_norm} in the consensus analysis.
\end{remark}

\subsection{Basic properties of sampling times}
\label{sec:sampling}
Let $\{t_k^i \}_{k \in \mathbb{N}_0}$ be a strictly increasing sequence
of real numbers with $t_0^i \coloneqq 0$
for $i \in \mathcal{N} = \{1,2,\dots,N\}$.
Set $t_0 \coloneqq 0$ and $k_i(0) \coloneqq 0$ for $i \in \mathcal{N}$. Define 
\begin{align*}
t_{\ell+1} &\coloneqq \min_{i \in \mathcal{N}} t^i_{k_i(\ell) + 1} \\
k_i(\ell +1) &\coloneqq 
\max\big\{
k \in \mathbb{N}_0 : k \leq k_i(\ell) +1 \text{~~and~~} t_k^i \in \{
t_ 0,t_1,\dots,t_{\ell+1}
\}
\big\}
\end{align*}
for $\ell \in \mathbb{N}_0$ and $i \in \mathcal{N}$.
Roughly speaking, 
in the context of the multi-agent system,
$\{t_\ell\}_{\ell \in \mathbb{N}_0}$ are all sampling times of the agents
without duplication, and
$k_i(\ell)$ is the number of times agent~$i$ has measured the relative states
on the interval 
$(0,t_{\ell}]$. 
Hence $t^i_{k_i(\ell)}$ is the latest sampling time of agent~$i$
at time $t=t_{\ell}$.
Define $\mathcal{I}(0) \coloneqq \mathcal{N}$ and
\[
\mathcal{I}(\ell+1) \coloneqq 
\{
i \in \mathcal{N} :  t_{\ell+1} = t_{k_i(\ell)+1}^i
\}
\]
for $\ell \in \mathbb{N}_0$.
In our multi-agent setting,
$\mathcal{I}(\ell)$ represents the set of agents
measuring the relative states at $t= t_{\ell}$.

\begin{proposition}
	\label{prop:basic_property}
	Let $\{t_k^i \}_{k \in \mathbb{N}_0}$
	be a strictly increasing sequence 
	of real numbers with $t_0^i \coloneqq 0$
	for $i \in \mathcal{N} = \{
	1,2,\dots,N\}$.
	The sequences
	$\{t_\ell\}_{\ell \in \mathbb{N}_0}$ and $\{k_i(\ell)\}_{\ell \in \mathbb{N}_0}$ defined as above have the following properties
	for all $\ell \in \mathbb{N}_0$ and $i \in \mathcal{N}$:
	\begin{enumerate}
		\renewcommand{\labelenumi}{\alph{enumi})}
		\item $k_i(\ell) \leq k_i(\ell+1) \leq k_i(\ell) + 1$.
		\item 
		$k_i(\ell+1) = k_i(\ell) +1$ 
		if and only if $i \in \mathcal{I}(\ell+1)$. In this case,
		\[
		t_{\ell+1} = t_{k_i(\ell+1)}^i.
		\]
		\item $t^i_{k_i(\ell)} \leq  t_\ell < t_{\ell+1}$. 
		\item If $t_k^i \leq t_{\ell_1}$ for some $k,\ell_1 \in \mathbb{N}_0$, 
		then there exists $\ell_0 \in \mathbb{N}_0$ with $\ell_0 \leq \ell_1$
		such that $ t_k^i = t_{\ell_0}$.
		\item If $
		t^i_{k+1} - t^i_k \leq \tau^i_{\max}
		$
		for all $k \in \mathbb{N}_0$,
		then
		\[
		t_{\ell+1} \leq t^i_{k_i(\ell)} + \tau^i_{\max}.
		\]
		\item 
		If for all $i \in \mathcal{N}$, there exists $\tau_{\min}^i >0$
		such that 
		\[
		t_{k+1}^i - t_k^i \geq \tau_{\min}^i,
		\] then
		$t_{\ell} \to \infty$ as $\ell \to \infty$.
	\end{enumerate}
\end{proposition}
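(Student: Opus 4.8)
The plan is to prove parts a), b), c) simultaneously by induction on $\ell$, and then to read off parts d), e), f) from a convenient characterization of $k_i(\ell)$ that this induction produces. The key invariant I would carry along is the conjunction of two statements: (A) $t_0 < t_1 < \cdots < t_\ell$ (strict monotonicity of the merged sampling times up to index $\ell$), and (B) $t^i_{k_i(\ell)} \le t_\ell < t^i_{k_i(\ell)+1}$ for every $i \in \mathcal N$. The base case $\ell = 0$ is immediate, since $t_0 = t^i_0 = 0$ and $t^i_1 > 0$. For the inductive step, statement (B) at level $\ell$ gives $t^i_{k_i(\ell)+1} > t_\ell$ for all $i$, so $t_{\ell+1} = \min_i t^i_{k_i(\ell)+1} > t_\ell$; this both extends (A) to level $\ell+1$ and already supplies the strict inequality $t_\ell < t_{\ell+1}$ of part c). The left inequality of c), namely $t^i_{k_i(\ell+1)} \le t_{\ell+1}$, follows because $t^i_{k_i(\ell+1)} \in \{t_0,\dots,t_{\ell+1}\}$ by definition, and all these values are $\le t_{\ell+1}$ by (A).

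Next I would settle parts a) and b) within the same step. The bound $k_i(\ell+1) \le k_i(\ell)+1$ is immediate from the constraint $k \le k_i(\ell)+1$ in the defining maximum, and the bound $k_i(\ell+1) \ge k_i(\ell)$ follows because (B) at level $\ell$ makes $k = k_i(\ell)$ an admissible index ($t^i_{k_i(\ell)} \le t_\ell \le t_{\ell+1}$). For b), if $i \in \mathcal I(\ell+1)$ then $t^i_{k_i(\ell)+1} = t_{\ell+1}$, which makes $k = k_i(\ell)+1$ admissible and forces $k_i(\ell+1) = k_i(\ell)+1$ together with $t_{\ell+1} = t^i_{k_i(\ell+1)}$; conversely, if $k_i(\ell+1) = k_i(\ell)+1$ then $t^i_{k_i(\ell)+1} \in \{t_0,\dots,t_{\ell+1}\}$, and since $t^i_{k_i(\ell)+1} \ge t_{\ell+1}$ (as $t_{\ell+1}$ is the minimum) while (A) bounds every listed time by $t_{\ell+1}$, the common value must be $t_{\ell+1}$, i.e. $i \in \mathcal I(\ell+1)$. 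To close the induction I still verify the right half of (B) at level $\ell+1$, i.e. $t_{\ell+1} < t^i_{k_i(\ell+1)+1}$, in two cases: if $k_i(\ell+1) = k_i(\ell)+1$ the next sample strictly exceeds $t_{\ell+1}$ by strict monotonicity of $\{t^i_k\}$; if $k_i(\ell+1) = k_i(\ell)$, then an equality $t_{\ell+1} = t^i_{k_i(\ell)+1}$ would force an increment by the argument just given, a contradiction.

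The induction delivers the characterization $k_i(\ell) = \max\{k \in \mathbb N_0 : t^i_k \le t_\ell\}$, since (B) reads $t^i_{k_i(\ell)} \le t_\ell < t^i_{k_i(\ell)+1}$ and $\{t^i_k\}$ is strictly increasing. For d), given $t^i_k \le t_{\ell_1}$ this characterization yields $k \le k_i(\ell_1)$; as $k_i(0) = 0$ and $k_i$ rises by at most $1$ per step (part a)), there is a first index $\ell_0 \le \ell_1$ with $k_i(\ell_0) = k$, and part b) gives $t^i_k = t_{\ell_0}$. Part e) is a one-line estimate, $t_{\ell+1} = \min_j t^j_{k_j(\ell)+1} \le t^i_{k_i(\ell)+1} \le t^i_{k_i(\ell)} + \tau^i_{\max}$. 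For f) I would argue by contradiction: $\{t_\ell\}$ is increasing, so were it bounded by some $T$, the characterization together with $t^i_k \ge k\,\tau^i_{\min} \to \infty$ would bound each $k_i(\ell)$ by a constant $K_i$; but the minimizer shows $\mathcal I(\ell+1) \neq \emptyset$, so $\sum_i k_i(\ell)$ increases by at least $1$ at each step, giving $\sum_i k_i(\ell) \ge \ell$, which contradicts $\sum_i k_i(\ell) \le \sum_i K_i$ for large $\ell$.

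I expect the interlocking induction for a)--c) to be the main obstacle, because the definition of $k_i(\ell+1)$ refers to the merged times $t_0,\dots,t_{\ell+1}$, whose strict monotonicity must be established in the very same step that uses it; the precise ordering of the sub-claims, and the careful use of ``$t_{\ell+1}$ is a minimum'' versus ``the listed times are bounded by $t_{\ell+1}$'', is where the argument can easily become circular. Once this bookkeeping is pinned down and the characterization of $k_i(\ell)$ is extracted, parts d)--f) are routine.
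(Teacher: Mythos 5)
Your proposal is correct, but it is organized quite differently from the paper's proof. The paper proves the parts one at a time: a) by a separate induction showing the admissible index sets are nested; c) by a case analysis on whether $\mathcal I(\ell)\cap\mathcal I(\ell+1)$ is empty; and f) by extracting, via pigeonhole, a subsequence of $\{t_\ell\}$ lying in a single agent's sampling sequence and deriving a Cauchy-type contradiction with the dwell time $\tau_{\min}^i$. You instead run a single interlocking induction carrying the invariant $t^i_{k_i(\ell)}\le t_\ell< t^i_{k_i(\ell)+1}$ together with strict monotonicity of the merged times, which delivers a)--c) at once and, more importantly, yields the clean characterization $k_i(\ell)=\max\{k: t^i_k\le t_\ell\}$ that the paper never states explicitly; your f) then becomes a counting argument ($\sum_i k_i(\ell)\ge\ell$ versus boundedness of each $k_i$), which is arguably simpler than the paper's subsequence extraction. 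Both approaches buy essentially the same thing; yours is more systematic and makes the non-circularity of the definitions explicit, while the paper's is more piecemeal but requires no global invariant. One small imprecision to fix: when you argue $k_i(\ell+1)\ge k_i(\ell)$, the parenthetical justification ``$t^i_{k_i(\ell)}\le t_\ell\le t_{\ell+1}$'' is not what the definition requires --- admissibility demands $t^i_{k_i(\ell)}\in\{t_0,\dots,t_{\ell+1}\}$, i.e.\ set membership rather than an inequality. This membership does hold, but for the tautological reason that $k_i(\ell)$ is by definition the maximum over indices $k$ with $t^i_k\in\{t_0,\dots,t_\ell\}$; cite that instead of invariant (B).
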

\begin{proof}
	a) The inequality 
	\[
	k_i(\ell+1) \leq k_i(\ell) + 1
	\] 
	follows immediately
	from the definition of $k_i(\ell+1)$.
	Since $k_i(0) = 0 \leq k_i(1)$, the inequality 
	\begin{equation}
	\label{eq:ki_ell}
	k_i(\ell) \leq k_i(\ell+1)
	\end{equation}
	holds for $\ell = 0$. Suppose that 
	the inequality \eqref{eq:ki_ell} holds for some $\ell \in \mathbb{N}_0$.
	Then 
	\begin{align*}
	&\big\{
	k \in \mathbb{N}_0 : 
	k \leq k_i(\ell) +1\text{~~and~~}
	t_k^i \in \{
	t_0,t_1,\dots,t_{\ell+1}
	\}
	\big\} \\
	&\hspace{3pt} \subseteq
	\big\{
	k \in \mathbb{N}_0:k \leq k_i(\ell+1) +1\text{~~and~~} t_k^i \in \{
	t_0,t_1,\dots,t_{\ell+2}
	\}
	\big\},
	\end{align*}
	which yields $k_i(\ell+1) \leq k_i(\ell+2)$.
	Therefore, the inequality \eqref{eq:ki_ell} 
	holds for all $\ell \in \mathbb{N}_0$ by induction.
	
	b)  Assume that $k_i(\ell+1) = k_i(\ell) +1$.
	By the definition of $t_{\ell+1}$, 
	we obtain $t_{\ell + 1} \leq  t^i_{k_i(\ell) + 1}$.
	On the other hand, 
	$t^i_{k_i(\ell+1)} \in \{t_0,\dots,t_{\ell+1} \}$ by
	the definition of $k_i(\ell+1)$.
	Since $\{t_\ell\}_{\ell \in \mathbb{N}_0}$ is a nondecreasing sequence by a), 
	it follows that
	\[
	t^i_{k_i(\ell) + 1} = t^i_{k_i(\ell+1)} \leq t_{\ell+1}.
	\]
	Hence 
	\[
	t_{\ell+1} = t^i_{k_i(\ell) + 1} = t^i_{k_i(\ell+ 1) }.
	\]
	
	Conversely, assume that $t_{\ell+1} = t^i_{k_i(\ell) + 1}$.
	Then 
	\[
	t^i_{k_i(\ell) + 1} \in \{
	t_0,t_1,\dots,t_{\ell+1}
	\}.
	\] 
	By the definition of $k_i(\ell+1)$, we obtain $k_i(\ell + 1) = k_i(\ell) + 1$.

	c)
	The definition of $k_i(\ell)$ directly yields
	\[
	t^i_{k_i(\ell)} \leq  t_{\ell}.
	\]	
	It remains to show that 
	\[
	t_{\ell} < t_{\ell+1}.
	\]
	By construction, 
	$\mathcal{I}(\ell) \not= \emptyset$ holds for all
	$\ell \in \mathbb{N}_0$.
	First, we consider the case
	\[
	\mathcal{I}(\ell) \cap \mathcal{I}(\ell+1) = \emptyset.
	\]
	Since $\mathcal{I}(0) = \mathcal{N}$
	by definition, we obtain $\ell \geq 1$.
	Let $i \in \mathcal{I}(\ell+1)$. 
	Then
	$t_{\ell+1} = t^i_{k_i(\ell)+1}$.
	On the other hand, $i \not \in \mathcal{I}(\ell)$
	and hence 
	\[
	t_{\ell} < t^i_{k_i(\ell-1)+1}.
	\] 
	Since 
	$k_i(\ell-1) = k_i(\ell)$ by a) and b), 
	we obtain
	\[
	t_{\ell} < t^i_{k_i(\ell-1)+1} = t^i_{k_i(\ell)+1} =  t_{\ell+1}.
	\]
	
	Next, assume that 
	\[
	\mathcal{I}(\ell) \cap \mathcal{I}(\ell+1) 
	\not= \emptyset.
	\]
	Let 
	\[
	i \in \mathcal{I}(\ell) \cap \mathcal{I}(\ell+1) .
	\]
	Then $t_{\ell+1} = t^i_{k_i(\ell)+1} $ from
	$i \in \mathcal{I}(\ell+1)$.
	If $\ell =0$, then 
	\[
	t_{\ell} = t_0 = 0 = t^i_{k_i(0)} = t^i_{k_i(\ell)}.
	\]
	If $\ell \geq 1$, then we have from $i \in \mathcal{I}(\ell)$ and b)
	that 
	\[
	t_{\ell}  = t^i_{k_i(\ell)}.
	\]
	Since $t^i_{k_i(\ell)} <  t^i_{k_i(\ell)+1}$, 
	it follows that $t_{\ell} < t_{\ell+1}$.

	d) We have from c) that
	\[
	t_{\ell_1} < t_{\ell_1+1} \leq t^i_{k_i(\ell_1)+1}.
	\]
	This and the assumption $t_k^i \leq t_{\ell_1}$ yield
	$t_k^i < t^i_{k_i(\ell_1)+1}$, and therefore $k \leq k_i(\ell_1)$. 
	Let 
	\[
	\ell_0 \coloneqq \min\{\ell \in \mathbb{N}_0: \ell \leq \ell_1
	\text{~~and~~}k = k_i(\ell) \}.
	\]
	If $\ell_0=0$, then we obtain
	\[
	t_k^i = t_0^i = 0 = t_0 = t_{\ell_0}.
	\]
	Assume that $\ell_0 \not=0$. Then $k =  k_i(\ell_0)\geq 1$ and 
	\[
	k_i(\ell_0) = k_i(\ell_0-1) + 1.
	\] 
	This and b) yield
	\[
	t^i_k= t_{k_i(\ell_0)}^i = t_{\ell_0}.
	\]

	e)
	Since
	$
	t_{\ell+1} \leq t^i_{k_i(\ell)+1}
	$
	by the definition of $t_{\ell+1}$,
	it follows that
	\[
	t_{\ell + 1} -  t^i_{k_i(\ell)} \leq t^i_{k_i(\ell)+1} -  t^i_{k_i(\ell)} \leq 
	\tau^i_{\max}.
	\]

	f) For all $\ell\in \mathbb{N}_0$,
	there exists $i \in \mathcal{N}$
	such that $t_{\ell} \in \{t^i_{k} \}_{k \in \mathbb{N}_0}$.
	We have from c) that $t_{\ell} \not= t_{\ell+1} $.
	Since  $\mathcal{N}$ is a set with finite elements,
	there exist $i \in \mathcal{N}$ and 
	a 
	subsequence $\{t_{\ell(p)} \}_{p \in \mathbb{N}_0}$
	of $\{t_{\ell} \}_{\ell \in \mathbb{N}_0}$
	such that 
	\[
	t_{\ell(p)} \in  \{t^i_{k} \}_{k \in \mathbb{N}_0}
	\]
	for all $p \in \mathbb{N}_0$. For each $p \in \mathbb{N}_0$, let 
	$k(p) \in \mathbb{N}_0$ satisfy 
	\[
	t_{\ell(p)} = t^i_{k(p)}.
	\]
	Assume, to get a contradiction, that $\sup_{\ell \in \mathbb{N}_0}t_\ell 
	< \infty$. Take 
	\[
	0 < \varepsilon < \tau_{\min}^i.
	\]
	There exists $p_0 \in \mathbb{N}_0$ such that
	\[
	t_{\ell(p+1)} - t_{\ell(p)} < \varepsilon
	\]
	for all $p \geq p_0$.
	
	Choose $p \geq p_0$ arbitrarily. We obtain
	\[
	t_{\ell(p)} = t^i_{k(p)} < t^i_{k(p)+1} \leq t^i_{k(p+1)} = t_{\ell(p+1)} .
	\]
	Hence
	\begin{equation}
	\label{eq:kp1+kp_bound}
	t^i_{k(p)+1} - t^i_{k(p)} \leq
	t_{\ell(p+1)} - t_{\ell(p)} < \varepsilon.
	\end{equation}
	By assumption,
	\[
	t^i_{k(p)+1} - t^i_{k(p)} \geq \tau_{\min}^i > \varepsilon,
	\]
	which contradicts \eqref{eq:kp1+kp_bound}.
\end{proof}

\section{Consensus Analysis }
\label{sec:Analysis}
In this section, first we define a semi-norm based on the maximum norm.
Next, we obtain a bound of the state with respect to the semi-norm
for the design of the quantization range.
After these preparations, we give a sufficient condition for consensus 
in the main theorem. 
Finally, we find bounds of the constant $\Gamma$ in \eqref{eq:Ptau1_prop} corresponding to our multi-agent setting.

Throughout this and the next sections, 
we consider the quantized self-triggered multi-agent system presented in 
Section~\ref{sec:sys_model}.
Let
$\{t_k^i\}_{k\in \mathbb{N}_0}$ with $t_0^i \coloneqq 0$ be
the sampling times 
of agent~$i \in \mathcal{N}$, which are given in \eqref{eq:STM}.
Define $\{t_\ell\}_{\ell \in \mathbb{N}_0}$ and $\{k_i(\ell)\}_{\ell \in \mathbb{N}_0}$ as in Section~\ref{sec:sampling}.
We let $L \coloneqq L(G)$, where
$G$ is the undirected graph of the multi-agent system.

Define 
\begin{align*}
x(t) &\coloneqq 
\begin{bmatrix}
x_1(t) & x_2(t) & \cdots & x_n(t)
\end{bmatrix}^{\top} \\
x_0 &\coloneqq 
\begin{bmatrix}
x_{10} & x_{20} & \cdots & x_{n0}
\end{bmatrix}^{\top}\\
f(t) &\coloneqq 
\begin{bmatrix}
f_1(t) & f_2(t) & \cdots & f_n(t)
\end{bmatrix}^{\top} \\
g(t) &\coloneqq 
\begin{bmatrix}
g_1(t) & g_2(t) & \cdots& g_n(t)
\end{bmatrix}^{\top}
\end{align*}
for $t \geq 0$.
Then we have from the dynamics \eqref{eq:multi_agent_sys_ind} of individual agents  that
\begin{equation*}
\Sigma_{\mathrm{MAS}}\quad
\begin{cases}
\dot x(t) = -Lx(t) + f(t)+g(t),\quad t  \geq 0;\\
x(0) = x_0.
\end{cases}
\end{equation*}

\subsection{Semi-norm based on the maximum norm}
We start by showing the following simple result.
\begin{lemma}
	\label{lem:Ptau_bound}
	Let $N \in \mathbb{N}$ satisfy $N \geq 2$ and let 
	$L$ be the Laplacian matrix of  
	a connected  undirected graph with $N$ vertices.
	Let 
	$\|\cdot\|$ be an arbitrary norm on $\mathbb{R}^N$ and the corresponding 
	induced norm on $\mathbb{R}^{N \times N}$. 
	Fix $\gamma \leq \lambda_2(L)$, and define
	\begin{equation*}
	\Gamma \coloneqq \sup_{t \geq 0} \|e^{\gamma t }(e^{-Lt} - 
	\mathbf{1}\bar{\mathbf{1}})\|.
	\end{equation*}
	Then $\Gamma < \infty$ and the inequalities
	\begin{align}
	\label{eq:M_rho}
	\|e^{-Lt}(v - \ave(v)\mathbf{1})\|
	&\leq \Gamma e^{-\gamma t } \|v - \ave(v)\mathbf{1}\|\\
	\label{eq:M_rho2}
	\|e^{-Lt}(v - \ave(v)\mathbf{1})\|
	&\leq \Gamma e^{-\gamma t } \|v\|
	\end{align}
	hold for all $v\in \mathbb{R}^N$ and $t \geq 0$.
\end{lemma}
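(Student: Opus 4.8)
The plan is to exploit the spectral structure of the symmetric Laplacian and to reduce the boundedness of $\Gamma$, which is phrased for an \emph{arbitrary} induced norm, to the Euclidean case by equivalence of norms.

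First I would record the standard spectral facts. Since the graph is connected, $L$ is symmetric and positive semidefinite, its zero eigenvalue is simple with eigenvector $\mathbf{1}$, and $\lambda_2(L) > 0$. Write an orthonormal eigendecomposition $L = \sum_{i=1}^N \lambda_i u_i u_i^{\top}$ with $\lambda_1 = 0$, $u_1 = \mathbf{1}/\sqrt{N}$, and $0 < \lambda_2(L) = \lambda_2 \le \cdots \le \lambda_N$. The key observation is that $P \coloneqq \mathbf{1}\bar{\mathbf{1}} = \frac{1}{N}\mathbf{1}\mathbf{1}^{\top} = u_1 u_1^{\top}$ is exactly the orthogonal projection onto $\ker L$, so $P^2 = P$, $LP = PL = 0$, and consequently $e^{-Lt}P = Pe^{-Lt} = P$ for all $t \ge 0$. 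I would also note $\ave(v)\mathbf{1} = Pv$, hence $v - \ave(v)\mathbf{1} = (I-P)v$.

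Next I would prove $\Gamma < \infty$. The functional calculus gives $e^{-Lt} - P = \sum_{i \ge 2} e^{-\lambda_i t} u_i u_i^{\top}$, so in the Euclidean induced norm $\|e^{-Lt}-P\|_2 = \max_{i\ge 2} e^{-\lambda_i t} = e^{-\lambda_2(L)t}$ for $t \ge 0$. Since $\gamma \le \lambda_2(L)$, this yields $\|e^{\gamma t}(e^{-Lt}-P)\|_2 = e^{(\gamma - \lambda_2(L))t} \le 1$ for all $t \ge 0$. By equivalence of norms on $\mathbb{R}^{N\times N}$ there is a constant $c>0$ with $\|A\| \le c\|A\|_2$ for every $A$, so $\|e^{\gamma t}(e^{-Lt}-P)\| \le c$ uniformly in $t$, and therefore $\Gamma \le c < \infty$. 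For the two inequalities I would then use the projection algebra: from $e^{-Lt}P = P$ we get $e^{-Lt}(I-P) = e^{-Lt} - P$, and from $P(I-P)=0$ we also get $(e^{-Lt}-P)(I-P) = e^{-Lt}-P$. Combined with $v - \ave(v)\mathbf{1} = (I-P)v$ and submultiplicativity of the induced norm, \eqref{eq:M_rho2} follows from
\[
\|e^{-Lt}(v-\ave(v)\mathbf{1})\| = \|(e^{-Lt}-P)v\| \le \|e^{-Lt}-P\|\,\|v\|,
\]
together with $\|e^{-Lt}-P\| = e^{-\gamma t}\|e^{\gamma t}(e^{-Lt}-P)\| \le \Gamma e^{-\gamma t}$; and \eqref{eq:M_rho} follows identically after rewriting $(e^{-Lt}-P)v = (e^{-Lt}-P)(I-P)v$, which replaces $\|v\|$ by $\|(I-P)v\| = \|v - \ave(v)\mathbf{1}\|$.

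The main obstacle is precisely the boundedness claim $\Gamma<\infty$, since $\Gamma$ is defined through an arbitrary induced norm rather than the Euclidean one, where the spectral theorem does not apply directly. The resolution is to obtain the clean estimate $\|e^{\gamma t}(e^{-Lt}-P)\|_2 \le 1$ in the Euclidean norm and then transfer it, at the cost of a single constant, via equivalence of norms; the continuity of $t \mapsto \|e^{\gamma t}(e^{-Lt}-P)\|$ makes the behavior on any compact $t$-interval harmless, so only the decay as $t \to \infty$ matters. Everything else reduces to routine projection algebra and the definition of $\Gamma$.
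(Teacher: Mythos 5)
Your proof is correct and follows essentially the same route as the paper: both rest on the orthogonal eigendecomposition of the symmetric Laplacian, the identification of $\mathbf{1}\bar{\mathbf{1}}$ with the spectral projection onto $\ker L$, and the bound $\gamma \leq \lambda_2(L)$ on the remaining modes, with finiteness of $\Gamma$ in the arbitrary norm obtained by equivalence of norms. Your projection-operator formulation ($P$, $I-P$) is just a cleaner packaging of the paper's explicit factorization $V_0 = \bigl[\mathbf{1}/\sqrt{N}\ \ V\bigr]$ and the identities \eqref{eq:Ptau_decomp1}--\eqref{eq:Ptau_decomp2}.
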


\begin{proof}
	Let $\lambda_1,\lambda_2,\lambda_3,\dots,\lambda_N$ be the eigenvalues of $L$. Since the undirected graph corresponding to $L$ is connected, we have that 
	$0$ is an eigenvalue of $L$ with algebraic multiplicity $1$. 
	Let $\lambda_1 \coloneqq 0$ and define
	\begin{align*}
	\Lambda_0 &\coloneqq 
	\diag \big(0 ,\,\lambda_2,\,\lambda_3,\,\cdots,\,\lambda_N \big)\\
	\Lambda &\coloneqq 
	\diag \big(\lambda_2\, ,\lambda_3\,,\cdots,\,\lambda_N \big).
	\end{align*}
	There exists an orthogonal matrix $V_0 \in \mathbb{R}^{N\times N}$ such that 
	\[
	L = V_0 \Lambda_0  V_0^{\top}.
	\]
	Since 
	$\mathbf{1}$ is the eigenvector corresponding to the
	eigenvalue $\lambda_1 = 0$,
	one can decompose $V_0$  into
	\[
	V_0 = 
	\begin{bmatrix}
	\dfrac{\mathbf{1} }{\sqrt{N}} & V
	\end{bmatrix}
	\]
	for some $V \in \mathbb{R}^{N \times(N-1)}$.
	
	Let $v \in \mathbb{R}^N$
	and $t \geq 0$. 
	Noting that 
	\[
	\frac{\mathbf{1} }{\sqrt{N}}
	\left(
	\dfrac{\mathbf{1}^{\top}}{\sqrt{N}}v
	\right) =
	\ave(v) \mathbf{1},
	\]
	we obtain
	\begin{align}
	e^{-Lt}v &= V_0 e^{-\Lambda_0 t}  V_0^{\top}v 
	=
	\ave(v) \mathbf{1} + Ve^{-\Lambda t} V^{\top} v.
	\label{eq:exp_v}
	\end{align}
	Since \[
	\ave\big((\ave(v) \mathbf{1}) \big)\mathbf{1} = \ave(v) \mathbf{1},
	\]
	it follows that
	\begin{align}
	\label{eq:exp_ave_v}
	e^{-Lt}(\ave(v) \mathbf{1}) = 
	\ave(v) \mathbf{1} + Ve^{-\Lambda t} V^{\top} (\ave(v) \mathbf{1}).
	\end{align}
	By \eqref{eq:exp_v} and \eqref{eq:exp_ave_v},
	\begin{equation}
	\label{eq:Ptau_decomp1}
	e^{-Lt}(v - \ave(v) \mathbf{1})  =
	V e^{-\Lambda t}
	V^{\top} (v - \ave(v) \mathbf{1}).
	\end{equation}
	On the other hand,
	using $e^{-Lt} \mathbf{1} =  \mathbf{1}$, we obtain
	\begin{equation}
	\label{eq:exp_aveone}
	e^{-Lt}(\ave(v) \mathbf{1}) = \ave(v) \mathbf{1}.
	\end{equation}
	By \eqref{eq:exp_v} and \eqref{eq:exp_aveone},
	\begin{equation}
	\label{eq:Ptau_decomp2}
	e^{-Lt}(v - \ave(v) \mathbf{1})  =
	V e^{-\Lambda t} V^{\top}
	v.
	\end{equation}
	
	Since
	$\lambda_i \geq \lambda_2(L) \geq \gamma$ 
	for all $i=2,3,\dots,N$,
	it follows that
	\[
	C \coloneqq \sup_{t \geq 0}\| e^{\gamma t}e^{-\Lambda t} \| < \infty.
	\]
	Moreover, \eqref{eq:exp_aveone} gives
	\begin{equation}
	\label{eq:v_compact}
	e^{-Lt}(v - \ave(v) \mathbf{1}) = 
	(e^{-Lt}-  \mathbf{1}\bar{\mathbf{1}})v.
	\end{equation}
	Using \eqref{eq:Ptau_decomp2} and \eqref{eq:v_compact},
	we have 
	\begin{equation}
	\label{eq:Gamma_bound}
	\Gamma
	=
	\sup_{t \geq 0}
	\|V e^{\gamma t}e^{-\Lambda t} V^{\top} \| \leq C \|V\| ~\! \|V^{\top}\| < \infty.
	\end{equation}
	The inequalities \eqref{eq:M_rho} and \eqref{eq:M_rho2}	
	follow from \eqref{eq:Ptau_decomp1} and
	\eqref{eq:Ptau_decomp2}, respectively.
\end{proof}

Fix
a constant
$0 < \gamma \leq \lambda_2(L)$. Here we apply Lemmas~\ref{lem:semi_norm} and \ref{lem:Ptau_bound}
in the case $\|\cdot\| = 
\|\cdot\|_{\infty}$. By Lemma~\ref{lem:Ptau_bound},
\begin{equation}
\label{eq:Gamma_inf_def}
\Gamma_{\infty} \coloneqq
\Gamma_{\infty}(\gamma) \coloneqq \sup_{t \geq 0} \|e^{\gamma t}(e^{-Lt} - 
\mathbf{1}\bar{\mathbf{1}})\|_{\infty} < \infty.
\end{equation}
It is immediate that
\begin{equation}
\label{eq:Gamma_inf_bound}
\Gamma_{\infty} \geq \|e^{\gamma 0}(e^{-L0} - 
\mathbf{1}\bar{\mathbf{1}})\|_{\infty}   = \|I - \mathbf{1}\bar{\mathbf{1}}\|_{\infty}= 2-\frac{2}{N} \geq 1
\end{equation}
for all $N \geq 2$.
We also have
\[
\|e^{-Lt}(v - \ave(v) \mathbf{1}) \|_{\infty} \leq 
\Gamma_{\infty} e^{-\gamma t} \|Fv\|_{\infty},
\]
where $F = I - \mathbf{1} \bar{\mathbf{1}}$ from \eqref{eq:M_rho}
and $F = I$ from \eqref{eq:M_rho2}.
Define 
\begin{equation}
\label{eq:vertiii_inf}
\vertiii{v}_{\infty} \coloneqq 
\sup_{t \geq 0} \|e^{\gamma t}e^{-Lt} 
(v - \ave(v)\mathbf{1})\|_{\infty},\quad v \in \mathbb{R}^N.
\end{equation}
Then $\vertiii{\cdot}_{\infty}$ is a semi-norm on $\mathbb{R}^N$
and satisfies the properties in Lemma~\ref{lem:semi_norm}.
The next lemma motivates us to investigate
the semi-norm of the state $x$ of  $\Sigma_{\mathrm{MAS}}$.
\begin{lemma}
	\label{lem:vivj_bound}
	Define the semi-norm $\vertiii{\cdot}_{\infty}$ as in \eqref{eq:vertiii_inf}.
	Let $v_i\in \mathbb{R}$ be the $i$-th element of $v\in \mathbb{R}^N$ for $i=1,\dots,N$.
	Then
	\begin{equation}
	\label{eq:vivj_bound}
	|v_i -v_j| \leq 2 \vertiii{v}_{\infty}
	\end{equation}
	for all $i,j =1,\dots,N$.
\end{lemma}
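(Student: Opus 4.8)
The plan is to reduce the pairwise difference $|v_i - v_j|$ to a bound on the deviation of each individual component from the average $\ave(v)$, and then invoke the elementary lower bound on the semi-norm supplied by Lemma~\ref{lem:semi_norm}.a). The key observation is that $\vertiii{\cdot}_{\infty}$ already satisfies all properties of Lemma~\ref{lem:semi_norm} (as noted right after \eqref{eq:vertiii_inf}), so no new structural work about $L$ or $e^{-Lt}$ is needed.

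First I would instantiate Lemma~\ref{lem:semi_norm}.a) with $\|\cdot\| = \|\cdot\|_{\infty}$, which gives
\[
\|v - \ave(v)\mathbf{1}\|_{\infty} \leq \vertiii{v}_{\infty},
\]
i.e.\ $\max_{k} |v_k - \ave(v)| \leq \vertiii{v}_{\infty}$ for every $v \in \mathbb{R}^N$. This is the only nontrivial ingredient, and it is already available.

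Next, for fixed indices $i,j$, I would write the difference as a difference of two deviations from the average,
\[
v_i - v_j = \big(v_i - \ave(v)\big) - \big(v_j - \ave(v)\big),
\]
and apply the triangle inequality together with the definition of the maximum norm to obtain
\[
|v_i - v_j| \leq |v_i - \ave(v)| + |v_j - \ave(v)| \leq 2\,\max_{k} |v_k - \ave(v)| = 2\,\|v - \ave(v)\mathbf{1}\|_{\infty}.
\]
Combining this with the displayed bound from property a) yields $|v_i - v_j| \leq 2\,\vertiii{v}_{\infty}$, which is exactly \eqref{eq:vivj_bound}.

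There is no genuine obstacle here: once Lemma~\ref{lem:semi_norm}.a) is in hand, the argument is a one-line triangle inequality. The factor of $2$ is intrinsic and cannot be removed in general, since the two components $v_i$ and $v_j$ may lie on opposite sides of the mean, each at the maximal deviation $\|v - \ave(v)\mathbf{1}\|_{\infty}$.
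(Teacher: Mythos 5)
Your proposal is correct and follows essentially the same route as the paper's proof: decompose $v_i - v_j$ through the average, apply the triangle inequality to get the factor $2\|v-\ave(v)\mathbf{1}\|_{\infty}$, and then invoke Lemma~\ref{lem:semi_norm}.a) with $\|\cdot\|=\|\cdot\|_{\infty}$. No gaps.
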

\begin{proof}
	For all $i,j = 1,\dots,N$,
	\[
	|v_i -v_j| \leq 
	\left|
	v_i - \ave(v)
	\right| + 
	\left|
	v_j - \ave(v)
	\right| \leq 2\|v - \ave(v)\mathbf{1}\|_{\infty}.
	\]
	By Lemma~\ref{lem:semi_norm}.a), we obtain
	\[
	\|v - \ave(v)\mathbf{1}\|_{\infty} \leq \vertiii{v}_{\infty}.
	\]
	Hence the desired inequality \eqref{eq:vivj_bound}
	holds for all $i,j = 1,\dots,N$.
\end{proof}

\subsection{Design of quantization ranges}
For a given $\omega >0$, the quantization range $E(t)$ is defined by
\begin{align}
\label{eq:E_def}
E(t) \coloneqq 
2\Gamma_{\infty} E_0 e^{-\omega t},\quad t \geq 0.
\end{align}
We also set
\begin{align*}
\kappa(\omega) &\coloneqq \max\left\{
\delta_i +\frac{d_i e^{\omega \tau^i_{\max}}  }{R}: i \in \mathcal{N}
\right\}
\end{align*}
and
\begin{align}
\tilde \tau_{\min}^i &\coloneqq
\min\left\{
\tau > 0 : \tau  
\left(d_i^2 +\sum_{j \in \mathcal{N}_i} d_j e^{\omega 
	\tau_{\max}^j}
\right)  = \delta_i e^{-\omega \tau}
\right\}  \label{eq:tau_min_def_usemin}\\
& =
\frac{1}{\omega} W\left(
\frac{\omega\delta_i}{ d_i^2 +\sum_{j \in \mathcal{N}_i} d_j e^{\omega 
		\tau_{\max}^j}}
\right).\label{eq:tau_min_def}
\end{align}
The following lemma shows that 
$\vertiii{x(t)}_{\infty}$ is bounded by $E(t)/2$
for a suitable decay parameter $\omega$.
\begin{lemma}
	\label{lem:fundamental_lemma}
	Suppose that Assumptions 
	\ref{assump:connectivity}--\ref{assump:quantization_number} hold.
	For each $i \in \mathcal{N}$,
	let
	the lower bound $\tau_{\min}^i$ of inter-event times satisfy
	\[
	0< \tau_{\min}^i \leq 
	\min \{ \tilde \tau_{\min}^i, \, \tau_{\max}^i\}.
	\]
	Assume that
	\begin{equation}
	\label{eq:omega_cond}
	0 < \omega \leq \gamma -2\Gamma_{\infty}\kappa(\omega) ,
	\end{equation}
	and define the quantization range $E(t)$ by \eqref{eq:E_def}.
	Then the state $x$ of  $\Sigma_{\mathrm{MAS}}$ satisfies
	\[
	\vertiii{x(t)}_{\infty} \leq \frac{E(t)}{2}
	\]
	for all $t \geq 0$, where 
	the semi-norm $\vertiii{\cdot}_{\infty}$ is defined by \eqref{eq:vertiii_inf}.
\end{lemma}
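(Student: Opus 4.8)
The plan is to combine the variation-of-constants representation of the state of $\Sigma_{\mathrm{MAS}}$ with the semi-contractivity of $e^{-Lt}$ from Lemma~\ref{lem:semi_norm}.d), and to close the resulting estimate by induction on the merged sampling times $\{t_\ell\}_{\ell\in\mathbb{N}_0}$ of Section~\ref{sec:sampling}. Since $L=L(G)$ is the Laplacian of a connected undirected graph, we have $\bar{\mathbf{1}}L=0$ and $L\mathbf{1}=0$, so Lemma~\ref{lem:semi_norm}.d) applies to the semi-norm $\vertiii{\cdot}_\infty$ of \eqref{eq:vertiii_inf}. Writing
\[
x(t) = e^{-Lt}x_0 + \int_0^t e^{-L(t-s)}\big(f(s)+g(s)\big)\,ds,
\]
applying $\vertiii{\cdot}_\infty$, and using the triangle inequality together with its integral form, I would obtain
\[
\vertiii{x(t)}_\infty \leq e^{-\gamma t}\vertiii{x_0}_\infty + \int_0^t e^{-\gamma(t-s)}\vertiii{f(s)+g(s)}_\infty\,ds.
\]
The initial term is handled by Lemma~\ref{lem:semi_norm}.a) with $F=I-\mathbf{1}\bar{\mathbf{1}}$ together with Assumption~\ref{assump:initial_bound}, which gives $\vertiii{x_0}_\infty \leq \Gamma_\infty\|x_0-\ave(x_0)\mathbf{1}\|_\infty \leq \Gamma_\infty E_0 = E(0)/2$.

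The crux is to bound the forcing term $\vertiii{f(s)+g(s)}_\infty \leq \Gamma_\infty\|f(s)+g(s)\|_\infty$ (Lemma~\ref{lem:semi_norm}.a) with $F=I$) by a multiple of $E(s)$. For the sampling error, the rule \eqref{eq:STM} directly forces $|f_i(s)|=|f_k^i(s-t_k^i)| \leq \delta_i E(s)$ whenever $s-t_k^i \geq \tau_{\min}^i$, since the infimum defining $\tau_k^i$ has not been attained before $s$; on the forced-wait window $[t_k^i,\,t_k^i+\tau_{\min}^i)$, however, the rule checks nothing, so I must verify the threshold is not already exceeded there. This is exactly where $\tilde\tau_{\min}^i$ enters: estimating \eqref{eq:fik_def} by $|f_k^i(\tau)| \leq \tau E(t_k^i)\big(d_i^2+\sum_{j\in\mathcal{N}_i}d_j e^{\omega\tau_{\max}^j}\big)$ (using $|q_i(t_k^i)|\leq d_i E(t_k^i)$ and $|u_j|\leq d_j e^{\omega\tau_{\max}^j}E$ under unsaturation at the relevant neighbor samples) and comparing with the defining equation \eqref{eq:tau_min_def_usemin} shows $|f_k^i(\tau)| \leq \delta_i E(t_k^i+\tau)$ for all $\tau \leq \tilde\tau_{\min}^i$, hence for $\tau \leq \tau_{\min}^i$. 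For the quantization error, unsaturation gives $|g_i(s)| \leq (d_i/R)E(t_k^i) \leq (d_i/R)e^{\omega\tau_{\max}^i}E(s)$, using $s-t_k^i<\tau_{\max}^i$. Together these yield $\|f(s)+g(s)\|_\infty \leq \kappa(\omega)E(s)$.

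Both error bounds presuppose the unsaturation condition \eqref{eq:E_bound} at the relevant sampling times, which is itself a consequence, via Lemma~\ref{lem:vivj_bound}, of the very estimate we are proving; this circularity is the main obstacle, and I would resolve it by induction on $\ell$. Assuming $\vertiii{x(t)}_\infty \leq E(t)/2$ on $[0,t_\ell]$, Lemma~\ref{lem:vivj_bound} gives $|x_i(t')-x_j(t')| \leq 2\vertiii{x(t')}_\infty \leq E(t')$ at every sampling instant $t'\leq t_\ell$; by Proposition~\ref{prop:basic_property}.c) every sampling instant relevant to some $s\in[0,t_{\ell+1})$ occurs at or before $t_\ell$, so the error bounds hold on $[0,t_{\ell+1})$ and the integral inequality can be evaluated there. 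Substituting $E(s)=2\Gamma_\infty E_0 e^{-\omega s}$ and computing the elementary integral, I would factor out $\Gamma_\infty E_0$ so that the coefficient of $(e^{-\omega t}-e^{-\gamma t})$ becomes $\tfrac{2\Gamma_\infty\kappa(\omega)}{\gamma-\omega}$; the condition \eqref{eq:omega_cond}, rewritten as $2\Gamma_\infty\kappa(\omega)\leq\gamma-\omega$, makes this coefficient at most $1$ and collapses the estimate to $\vertiii{x(t)}_\infty \leq \Gamma_\infty E_0 e^{-\omega t}=E(t)/2$ on $[0,t_{\ell+1})$, extending to $t_{\ell+1}$ by continuity. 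The base case $\ell=0$ is the initial bound above, and since each $\tau_{\min}^i>0$, Proposition~\ref{prop:basic_property}.f) gives $t_\ell\to\infty$, so the bound holds for all $t\geq 0$.
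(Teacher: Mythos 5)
Your proposal is correct and follows essentially the same route as the paper's proof: induction on the merged sampling times $\{t_\ell\}$, unsaturation via Lemma~\ref{lem:vivj_bound}, the split of the sampling-error bound into the triggered regime and the forced-wait window handled by $\tilde\tau_{\min}^i$, the quantization bound $|g_i|\leq (d_i e^{\omega\tau_{\max}^i}/R)E$, and closure via semi-contractivity and the condition $2\Gamma_{\infty}\kappa(\omega)\leq\gamma-\omega$. The only cosmetic difference is that you apply variation of constants from $t=0$ while the paper restarts it at each $t_\ell$ using the inductive bound $\vertiii{x(t_\ell)}_{\infty}\leq E(t_\ell)/2$; both evaluate the same integral and yield the same estimate.
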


\begin{proof}
	Since 
	$
	t_{\ell} \to \infty$ as $\ell \to \infty$
	by Proposition~\ref{prop:basic_property}.f),
	it suffices to prove that 
	\begin{equation}
	\label{eq:x_bound}
	\vertiii{x(t)}_{\infty} \leq \frac{E(t)}{2},\quad 0 \leq t \leq t_{\ell}
	\end{equation}
	for all $\ell \in \mathbb{N}_0$.
	Lemma~\ref{lem:semi_norm}.a) with $F =  I - \mathbf{1} \bar{\mathbf{1}}$ gives
	\[
	\vertiii{x(0)}_{\infty} \leq \Gamma_{\infty} 
	\big\|x(0) - \ave\big(x(0) \big)
	\mathbf{1}\big\|_{\infty}.
	\]
	By Assumption \ref{assump:initial_bound}, we obtain
	\[
	\big\|x(0) - \ave\big(x(0) \big)
	\mathbf{1}\big\|_{\infty} \leq E_0.
	\]
	Since $E(0) = 2\Gamma_{\infty}E_0$ by definition, it follows that
	\[
	\vertiii{x(0)}_{\infty} \leq \frac{E(0)}{2}.
	\]
	Therefore, \eqref{eq:x_bound} holds in the case $\ell = 0$.
	
	We now proceed by induction and assume the inequality
	\eqref{eq:x_bound} to be true for some
	$\ell \in \mathbb{N}_0$.
	Since
	\[
	t_{k_i(p)}^i \leq t_{\ell}
	\]
	for all $p=0,1,\dots,\ell$ and $i \in \mathcal{N}$,
	Lemma~\ref{lem:vivj_bound} yields
	\begin{align*}
	|x_i(t_{k_i(p)}^i) - x_j(t_{k_i(p)}^i)| 
	&\leq  E(t_{k_i(p)}^i)
	\end{align*}
	for all $p = 0,1,\dots,\ell$ and $i,j \in \mathcal{N}$. 
	In other words, the unsaturation condition \eqref{eq:E_bound}
	is satisfied for all $i \in \mathcal{N}$ until $t = t_{\ell}$.

	Fix $i \in \mathcal{N}$.
	Recall that the dynamics 
	of agent~$i$
	is given by \eqref{eq:multi_agent_sys_ind}.
	First we show that the error $f_i$ due to sampling,
	which is defined by \eqref{eq:fi_def}, satisfies 
	\begin{equation}
	\label{eq:fi_bound}
	|f_i(t)| < \delta_i E(t)
	\end{equation}
	for all $t \in [t_{\ell},  t_{\ell+1})$.
	Suppose that $t \in [t_{\ell},t_{\ell+1})$ satisfies
	\[
	t \geq t_{k_i(\ell)}^i + \tau_{\min}^i.
	\]
	Since $t_{k_i(\ell)}^i\leq t_{\ell}$ and $t_{\ell+1}\leq t_{k_i(\ell)+1}^i$
	by definition,
	it follows that
	\[
	f_i(t) = f^i_{k_i(\ell)} (t-t_{k_i(\ell)}^i),
	\]
	where $f^i_k$ is defined by
	\eqref{eq:fik_def}.
	The triggering mechanism \eqref{eq:STM} guarantees that
	\begin{equation*}
	|f^i_{k_i(\ell)} (t-t_{k_i(\ell)}^i)| < \delta_i E(t),
	\end{equation*} 
	and hence \eqref{eq:fi_bound} holds when $t \geq t_{k_i(\ell)}^i + \tau_{\min}^i$.
	
	Let us consider the case where $t \in [t_{\ell},t_{\ell+1})$ satisfies
	\[
	t< t_{k_i(\ell)}^i + \tau_{\min}^i.
	\]
	By definition, $t_{k_i(\ell)}^i \leq t_{\ell}$. Therefore,
	Proposition~\ref{prop:basic_property}.d) yields
	\[
	t_{k_i(\ell)}^i = t_{\ell_0}
	\]
	for some $\ell_0 \in \mathbb{N}_0$ with
	$\ell_0 \leq \ell$. Since
	the unsaturation condition \eqref{eq:E_bound}
	is satisfied until $t = t_{\ell}$, the equations \eqref{eq:xi_diff} and \eqref{eq:xj_diff}
	yield
	\begin{align}
	\label{eq:fik_detail}
	f_i(t) = 
	-(t-t_{\ell_0})d_i q_i(t_{\ell_0}) + (t-t_{\ell}) \sum_{j \in \mathcal{N}_i} q_j(t^j_{k_j(\ell)})  + 
	\sum_{p=0}^{\ell-\ell_0-1}
	(t_{\ell_0+p+1}-t_{\ell_0 + p}) \sum_{j \in \mathcal{N}_i} q_j(t^j_{k_j(\ell_0+p)}).
	\end{align}
	By definition,
	\begin{equation}
	\label{eq:qi_bound}
	|q_i(t_{\ell_0})| \leq d_iE(t_{\ell_0}).
	\end{equation}
	For each $p=0,1,\dots,\ell-\ell_0$ and $j \in \mathcal{N}_i$,
	Proposition~\ref{prop:basic_property}.a), c), and e) give
	\[
	t_{\ell_0} - \tau_{\max}^j \stackrel{\textrm{c)}}{<} 
	t_{\ell_0+1} - \tau_{\max}^j \stackrel{\textrm{e)}}{\leq} 
	t^j_{k_j(\ell_0)}
	\stackrel{\textrm{a)}}{\leq} 
	t^j_{k_j(\ell_0+p)}
	\]
	and hence
	\begin{equation}
	\label{eq:qj_bound}
	|q_j(t^j_{k_j(\ell_0 + p)}) | \leq d_jE(t^j_{k_j(\ell_0+ p) }) \leq 
	d_j e^{\omega 
		\tau_{\max}^j} E(t_{\ell_0}).
	\end{equation}
	Combining \eqref{eq:fik_detail} with the inequalities
	\eqref{eq:qi_bound} and \eqref{eq:qj_bound}, we obtain
	\begin{align*}
	|f_i(t)| \leq 
	(t-t_{\ell_0})  \left(d_i^2 +\sum_{j \in \mathcal{N}_i} d_j e^{\omega 
		\tau_{\max}^j}
	\right) E(t_{\ell_0}).
	\end{align*}
	Since $t-t_{\ell_0} <\tilde \tau_{\min}^i$,
	we see from
	the definition \eqref{eq:tau_min_def_usemin} of $\tilde \tau_{\min}^i$ that
	\begin{align*}
	(t-t_{\ell_0})  
	\left(d_i^2 +\sum_{j \in \mathcal{N}_i} d_j e^{\omega 
		\tau_{\max}^j}
	\right) E(t_{\ell_0})  <\delta_i e^{-\omega(t-t_{\ell_0}) } 
	E(t_{\ell_0}) = \delta_i E(t).
	\end{align*}
	Hence,
	the inequality \eqref{eq:fi_bound} holds also when
	$t< t_{k_i(\ell)}^i + \tau_{\min}^i$.

	Next we study $|g_i(t)|$ for $t_{\ell} \leq t < t_{\ell+1}$,
	where $g_i$ is 
	defined as in \eqref{eq:gi_def} and  is the error due to quantization.
	Since
	the unsaturation condition \eqref{eq:E_bound}
	is satisfied until $t = t_{\ell}$, we have that
	\[
	\big| \big(x_i(t_{k_i(\ell)}^i) - x_j(t_{k_i(\ell)}^i)\big) -
	q_{ij}(t_{k_i(\ell)}^i) \big| \leq \frac{E(t_{k_i(\ell)}^i)}{R} 
	\]
	for all $j \in \mathcal{N}_i$.
	Proposition~\ref{prop:basic_property}.e) shows that 
	$t_{\ell+1} - t_{k_i(\ell)}^i  \leq \tau_{\max}^i$, which gives
	\[
	\frac{E(t_{k_i(\ell)}^i)}{R}  = 
	\frac{e^{\omega(t-t_{k_i(\ell)}^i)}E(t)}{R} \leq 
	\frac{e^{\omega \tau_{\max}^i}}{R}E(t)
	\]
	for all $t \in [t_\ell, t_{\ell+1})$.
	Hence
	\begin{align}
	|g_i(t)| \leq  
	\sum_{j \in \mathcal{N}_i}
	\big| \big(x_i(t_{k_i(\ell)}^i) - x_j(t_{k_i(\ell)}^i)\big) -
	q_{ij}(t_{k_i(\ell)}^i) \big| 
	\leq
	\frac{d_i  e^{\omega \tau_{\max}^i}}{R}E(t)
	\label{eq:g_bound}
	\end{align}
	for all $t \in [t_{\ell},t_{\ell+1})$.

	From the inequalities \eqref{eq:fi_bound} and \eqref{eq:g_bound},
	we obtain
	\begin{align*}
	|f_i(t)+g_i(t)| 
	\leq \left(\delta_i +\frac{d_i  e^{\omega \tau_{\max}^i} }{R}  \right)
	E(t) 
	\leq 
	\kappa(\omega) E(t)
	\end{align*}
	for all $t \in [t_{\ell}, t_{\ell+1})$ and $i \in \mathcal{N}$.
	This and Lemma~\ref{lem:semi_norm}.a) with $F = I$ give
	\[
	\vertiii{f(t)+g(t)}_{\infty} \leq \Gamma_{\infty}  \|f(t)+g(t)\|_{\infty} \leq
	\Gamma_{\infty}   \kappa(\omega)
	E(t)
	\]
	for all $t \in [t_\ell, t_{\ell+1})$.
	Therefore,
	we have from Lemma~\ref{lem:semi_norm}.c) and d) that 
	\begin{align}
	\vertiii{x(t_{\ell}+\tau)}_{\infty} &\leq e^{-\gamma \tau} \vertiii{x(t_{\ell})}_{\infty} +
	\Gamma_{\infty} \kappa(\omega)  \int^\tau_0 e^{-\gamma (\tau - s)} 
	E(t_{\ell} + s) ds \notag \\
	&\leq 
	\left(
	e^{-\gamma \tau}+ 
	2\Gamma_{\infty} \kappa(\omega)  \int^\tau_0 e^{-\gamma (\tau - s)}
	e^{-\omega s}  ds
	\right)  \frac{E(t_\ell)}{2} \notag \\
	&  = 
	\left(
	\left(
	1 - \frac{2\Gamma_{\infty}\kappa(\omega)  }{\gamma - \omega}
	\right)
	e^{-\gamma \tau} 
	+
	\frac{2\Gamma_{\infty} \kappa(\omega) }{\gamma - \omega} e^{-\omega \tau} 
	\right) \frac{E(t_\ell)}{2} 
	\label{eq:x_tl_tau_bound}
	\end{align}
	for all $\tau \in [0,t_{\ell+1} - t_{\ell}]$.
	Since the condition \eqref{eq:omega_cond} on $\omega$ yields
	\[
	0<
	\frac{2 \Gamma_{\infty} \kappa(\omega)}{\gamma - \omega} 
	\leq 1,
	\]
	it follows that 
	\begin{equation}
	\label{eq:x_tl_tau_bound_coeff}
	\left(
	1 - \frac{2\Gamma_{\infty} \kappa(\omega) }{\gamma - \omega}
	\right)
	e^{-\gamma \tau} 
	+
	\frac{2\Gamma_{\infty} \kappa(\omega) }{\gamma - \omega} e^{-\omega \tau} 
	\leq e^{-\omega \tau}.
	\end{equation}
	Combining the inequalities \eqref{eq:x_tl_tau_bound}
	and \eqref{eq:x_tl_tau_bound_coeff}, we obtain
	\[
	\vertiii{x(t_{\ell}+\tau)}_{\infty}
	\leq 
	e^{-\omega \tau} \frac{E(t_\ell)}{2} = \frac{E(t_{\ell}+\tau)}{2} 
	\]
	for all $\tau \in [0,t_{\ell+1} - t_{\ell}]$.
	Thus $\vertiii{x(t)}_{\infty} \leq E(t)/2$
	for all $t \in [0,t_{\ell+1}]$.
\end{proof}

The condition $0< \omega \leq \gamma -2\Gamma_{\infty}\kappa(\omega)$ obtained in Lemma~\ref{lem:fundamental_lemma}
is in implicit form with respect to the decay parameter $\omega$.
We rewrite this condition in explicit form
by using 
the Lambert $W$-function. To this end,
we define
\begin{equation}
\label{eq:tilde_omega_cond}
\tilde \omega \coloneqq \min \left\{
\eta_i - \frac{W(\xi_i \tau_{\max}^i e^{\eta_i \tau_{\max}^i})}{\tau_{\max}^i} :
i \in \mathcal{N} \right\},
\end{equation}
where 
\begin{align*}
\xi_i \coloneqq \frac{2\Gamma_{\infty} d_i}{R},\quad
\eta_i \coloneqq \gamma - 2\Gamma_{\infty} \delta_i
\end{align*}
for $i \in \mathcal{N}$.
Note also that
\[
\gamma - 2 \Gamma_{\infty} \kappa(\omega) \leq
\gamma - 2 \Gamma_{\infty}
\left(
\delta_i + \frac{d_i}{R}
\right)
\]
for all $i \in \mathcal{N}$.
Therefore, if the inequality 
$0 < \gamma - 2 \Gamma_{\infty} \kappa(\omega) $ holds, then
one has
\begin{equation}
\label{eq:threshold_q_level_cond}
\delta_i  + \frac{d_i}{R} < \frac{\gamma}{2 \Gamma_{\infty}}
\end{equation}	
for all $i \in \mathcal{N}$.

\begin{lemma}
	\label{lem:omega_cond}
	Assume that the threshold 
	$\delta_i>0$ and the number $R \in \mathbb{N}$
	of quantization levels satisfy
	the inequality \eqref{eq:threshold_q_level_cond}
	for all $i \in \mathcal{N}$.
	Then $\tilde \omega$
	defined by
	\eqref{eq:tilde_omega_cond} satisfies $\tilde \omega >0$.
	Moreover, 
	the decay parameter $\omega$ satisfies 
	the condition \eqref{eq:omega_cond} 
	if and only if
	$0< \omega \leq \tilde \omega$.
\end{lemma}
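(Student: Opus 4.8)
The plan is to eliminate the maximum in \eqref{eq:omega_cond} and then invert the resulting scalar inequalities by means of the Lambert $W$-function. First I would push the maximum defining $\kappa(\omega)$ through the subtraction: since
\[
\gamma - 2\Gamma_{\infty} \kappa(\omega) = \min_{i \in \mathcal{N}}\left\{ \eta_i - \xi_i e^{\omega \tau_{\max}^i} \right\},
\]
the condition \eqref{eq:omega_cond} is equivalent to requiring $0 < \omega$ together with
\[
\phi_i(\omega) \coloneqq \omega + \xi_i e^{\omega \tau_{\max}^i} \leq \eta_i \quad \text{for every } i \in \mathcal{N}.
\]
This reduces the implicit condition on $\omega$ to $N$ independent scalar inequalities.

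Next I would analyze each $\phi_i$ separately. Each $\phi_i$ is continuous and strictly increasing on $\mathbb{R}$ (its derivative $1 + \xi_i \tau_{\max}^i e^{\omega \tau_{\max}^i}$ is positive), with $\phi_i(\omega) \to \pm\infty$ as $\omega \to \pm\infty$. Hence $\phi_i(\omega) = \eta_i$ has a unique root $\tilde\omega_i$, and $\phi_i(\omega) \leq \eta_i$ holds exactly when $\omega \leq \tilde\omega_i$. To obtain $\tilde\omega_i$ in closed form I would substitute $z = \eta_i - \omega$, turning $\phi_i(\omega) = \eta_i$ into
\[
\frac{1}{\xi_i e^{\eta_i \tau_{\max}^i}}\, z = e^{-\tau_{\max}^i z},
\]
which is precisely the transcendental equation $a(z - c) = e^{-\omega z}$ recorded in the Notation, read with $a = 1/(\xi_i e^{\eta_i \tau_{\max}^i})$, $c = 0$, and the rate parameter taken to be $\tau_{\max}^i$. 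The stated Lambert-$W$ formula then gives $z = \frac{1}{\tau_{\max}^i} W(\xi_i \tau_{\max}^i e^{\eta_i \tau_{\max}^i})$, so that
\[
\tilde\omega_i = \eta_i - \frac{W(\xi_i \tau_{\max}^i e^{\eta_i \tau_{\max}^i})}{\tau_{\max}^i},
\]
matching each term in the definition \eqref{eq:tilde_omega_cond} of $\tilde\omega$.

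For positivity I would use monotonicity once more: since $\tilde\omega_i$ is the unique root of the increasing function $\phi_i$, one has $\tilde\omega_i > 0$ if and only if $\phi_i(0) < \eta_i$, i.e.\ $\xi_i < \eta_i$. Unwinding the definitions of $\xi_i$ and $\eta_i$, this last inequality is exactly the hypothesis \eqref{eq:threshold_q_level_cond}, so every $\tilde\omega_i$ is positive and therefore $\tilde\omega = \min_{i \in \mathcal{N}} \tilde\omega_i > 0$. Finally, combining the scalar equivalences $\phi_i(\omega) \leq \eta_i \Leftrightarrow \omega \leq \tilde\omega_i$ over all $i$ shows that \eqref{eq:omega_cond} holds if and only if $0 < \omega \leq \min_{i \in \mathcal{N}} \tilde\omega_i = \tilde\omega$, which is the claim.

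I expect the only delicate point to be the algebraic reduction to the canonical form of the Notation: one must be careful that, after the substitution $z = \eta_i - \omega$, the unknown appears with a \emph{negative} exponent so that the stated formula applies, and that the argument $\xi_i \tau_{\max}^i e^{\eta_i \tau_{\max}^i}$ of $W$ is nonnegative (it is, since $\xi_i, \tau_{\max}^i > 0$) so that $W$ is well defined. Everything else is routine monotonicity bookkeeping.
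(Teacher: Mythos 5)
Your proposal is correct and follows essentially the same route as the paper: reduce \eqref{eq:omega_cond} to the per-agent scalar inequalities $\omega \leq \eta_i - \xi_i e^{\omega \tau_{\max}^i}$, invert each via the Lambert $W$-function to get $\omega \leq \eta_i - W(\xi_i \tau_{\max}^i e^{\eta_i \tau_{\max}^i})/\tau_{\max}^i$, and deduce positivity of $\tilde\omega$ from \eqref{eq:threshold_q_level_cond}. Your explicit monotonicity argument for $\phi_i$ is a slightly more detailed packaging of the paper's observation that the inequality holds for all sufficiently small $\omega>0$, but the substance is identical.
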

\begin{proof}
	Let $i \in \mathcal{N}$.
	The inequality \eqref{eq:threshold_q_level_cond} is equivalent to
	\[
	\gamma - 2\Gamma_{\infty} \left(
	\delta_i +\frac{d_i }{R}
	\right) >0.
	\]
	Since
	\[
	\eta_i - \xi_i e^{\omega \tau_{\max}^i}=
	\gamma - 2\Gamma_{\infty} \left(
	\delta_i +\frac{d_ie^{\omega \tau_{\max}^i} }{R}
	\right),
	\]
	it follows that for all sufficiently small $\omega >0$,
	the inequality
	\begin{equation}
	\label{eq:omega_cond_i}
	\omega \leq \eta_i - \xi_i e^{\omega \tau_{\max}^i}
	\end{equation}
	holds.
	The inequality \eqref{eq:omega_cond_i}
	is equivalent to
	\[
	\xi_i \tau_{\max}^i e^{\eta_i \tau_{\max}^i}
	\leq (\eta_i - \omega)\tau_{\max}^i  e^{(\eta_i - \omega)\tau_{\max}^i}.
	\]
	Therefore,
	using the Lambert $W$-function,
	one can write the inequality \eqref{eq:omega_cond_i} as
	\[
	\omega \leq \eta_i - \frac{W(\xi_i\tau_{\max}^i  e^{\eta_i \tau_{\max}^i})}{\tau_{\max}^i}.
	\]
	Since \eqref{eq:omega_cond_i} holds for 
	all sufficiently small $\omega >0$,
	we obtain $\tilde \omega >0$.
	
	By definition,
	\[
	\gamma -2\Gamma_{\infty}\kappa(\omega) = 
	\min \{
	\eta_i - \xi_i e^{\omega \tau_{\max}^i} : i \in \mathcal{N}
	\}.
	\]
	From this, it follows that $\omega \leq \gamma - 2\Gamma_{\infty}\kappa(\omega)$
	if and only if \eqref{eq:omega_cond_i} holds for all $i \in \mathcal{N}$.
	We have shown that  \eqref{eq:omega_cond_i} holds for all $i \in \mathcal{N}$ if and only if $\omega \leq \tilde \omega$. Thus,
	the condition \eqref{eq:omega_cond} is equivalent to $0< \omega \leq \tilde \omega$.
\end{proof}

\subsection{Main result}
Before stating the main result of this section,
we summarize the assumption on the parameters of
the quantization scheme and the triggering mechanism.
\begin{assumption}
	\label{assump:paramter}
	Let upper bounds $\tau^i_{\max} >0$ be given for all 
	$i \in \mathcal{N}$.
	The following three conditions are satisfied:
	\begin{enumerate}
		\renewcommand{\labelenumi}{\alph{enumi})}
		\item
		The threshold 
		$\delta_i>0$ 
		and the number $R \in \mathbb{N}$
		of quantization levels satisfy the inequality
		\eqref{eq:threshold_q_level_cond}
		for all $i \in \mathcal{N}$.
		\item 
		For all $i \in \mathcal{N}$,
		the lower bound $\tau_{\min}^i$  satisfies
		\[
		0< \tau_{\min}^i \leq 
		\min \{ \tilde \tau_{\min}^i, \,\tau_{\max}^i\},
		\] 
		where
		$\tilde \tau_{\min}^i$ is as in \eqref{eq:tau_min_def}.
		\item
		The decay parameter $\omega$ of the quantization range $E(t)$ defined by \eqref{eq:E_def} satisfies 
		$0 < \omega  \leq \tilde \omega$, where $\tilde \omega$
		is as in
		\eqref{eq:tilde_omega_cond}.
	\end{enumerate}
\end{assumption}
\begin{theorem}
	\label{thm:main_result}
	Suppose that Assumptions 
	\ref{assump:connectivity}--\ref{assump:quantization_number} and \ref{assump:paramter} hold.
	Then the unsaturation condition
	\eqref{eq:E_bound} is satisfied 
	for all $k \in \mathbb{N}_0$ and $i \in \mathcal{N}$.
	Moreover, $\Sigma_{\mathrm{MAS}}$
	achieves consensus exponentially  with decay rate $\omega$.
\end{theorem}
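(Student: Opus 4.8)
The plan is to assemble the preparatory results, since the genuine analytic work has already been carried out in Lemma~\ref{lem:fundamental_lemma}. First I would verify that every hypothesis of Lemma~\ref{lem:fundamental_lemma} is met under Assumptions~\ref{assump:connectivity}--\ref{assump:quantization_number} and \ref{assump:paramter}. Assumption~\ref{assump:paramter}.b) supplies exactly the required bound $0 < \tau_{\min}^i \le \min\{\tilde\tau_{\min}^i,\tau_{\max}^i\}$ on the lower inter-event times. For the decay parameter, Assumption~\ref{assump:paramter}.a) gives the inequality \eqref{eq:threshold_q_level_cond}, so Lemma~\ref{lem:omega_cond} applies and shows that the implicit condition \eqref{eq:omega_cond} is equivalent to $0 < \omega \le \tilde\omega$, which is precisely Assumption~\ref{assump:paramter}.c). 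Hence \eqref{eq:omega_cond} holds.

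With these hypotheses verified, I would invoke Lemma~\ref{lem:fundamental_lemma} to conclude that, with $E(t)$ defined by \eqref{eq:E_def},
\[
\vertiii{x(t)}_{\infty} \le \frac{E(t)}{2}
\]
for all $t \ge 0$. The remaining task is to translate this semi-norm bound into the two concrete claims of the theorem via Lemma~\ref{lem:vivj_bound}. For unsaturation, I would fix $i \in \mathcal{N}$ and $k \in \mathbb{N}_0$ and apply Lemma~\ref{lem:vivj_bound} to $v = x(t_k^i)$, which combined with the above bound gives
\[
|x_i(t_k^i) - x_j(t_k^i)| \le 2\vertiii{x(t_k^i)}_{\infty} \le E(t_k^i)
\]
for every neighbor $j \in \mathcal{N}_i$; this is exactly the unsaturation condition \eqref{eq:E_bound}.

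For the consensus claim, I would apply Lemma~\ref{lem:vivj_bound} at an arbitrary time: for all $t \ge 0$ and $i,j \in \mathcal{N}$,
\[
|x_i(t) - x_j(t)| \le 2\vertiii{x(t)}_{\infty} \le E(t) = 2\Gamma_{\infty} E_0 e^{-\omega t},
\]
using the definition \eqref{eq:E_def} of $E(t)$. Setting $\Omega := 2\Gamma_{\infty}$ then yields \eqref{eq:consensus}; and since $\Gamma_{\infty} = \Gamma_{\infty}(\gamma)$ depends only on $\gamma$ and the Laplacian $L$ through \eqref{eq:Gamma_inf_def} and not on $E_0$, the constant $\Omega$ is independent of $E_0$ as required. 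Thus $\Sigma_{\mathrm{MAS}}$ achieves consensus exponentially with decay rate $\omega$.

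I do not expect a genuine obstacle at the level of this theorem, because Lemma~\ref{lem:fundamental_lemma} already carries the inductive argument that simultaneously keeps the system unsaturated (so that the dynamics $\Sigma_{\mathrm{MAS}}$ and the input \eqref{eq:each_agent_input} remain well defined on each interval) and propagates the bound $\vertiii{x(t)}_{\infty} \le E(t)/2$. The only points that demand care are the logical chain through Lemma~\ref{lem:omega_cond} that converts the implicit constraint \eqref{eq:omega_cond} into the checkable condition $0 < \omega \le \tilde\omega$, and the observation that $\Omega = 2\Gamma_{\infty}$ is $E_0$-free; both are immediate once the earlier lemmas are in place.
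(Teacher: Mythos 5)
Your proposal is correct and follows essentially the same route as the paper's own proof: verify the hypotheses of Lemma~\ref{lem:fundamental_lemma} via Lemma~\ref{lem:omega_cond}, obtain the bound $\vertiii{x(t)}_{\infty}\leq E(t)/2$, and convert it through Lemma~\ref{lem:vivj_bound} into both the unsaturation condition and the consensus estimate with $\Omega = 2\Gamma_{\infty}$. Your explicit remark that $\Omega$ is independent of $E_0$ is a small but welcome addition that the paper leaves implicit.
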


\begin{proof}
	Since $0 < \omega  \leq \tilde \omega$,
	Lemma~\ref{lem:omega_cond} shows that 
	the condition \eqref{eq:omega_cond} on $\omega $ is satisfied.
	By Lemmas~\ref{lem:vivj_bound} and 
	\ref{lem:fundamental_lemma}, 
	we obtain
	\begin{equation}
	\label{eq:xij_E_bound}
	|x_i(t) - x_j(t)| \leq E(t) 
	\end{equation}
	for all $t \geq 0$ and $i,j \in \mathcal{N}$. Therefore,
	the unsaturation condition
	\eqref{eq:E_bound} is satisfied 
	for all $k \in \mathbb{N}_0$ and $i \in \mathcal{N}$.
	The inequality \eqref{eq:xij_E_bound} and 
	the definition \eqref{eq:E_def} of $E(t)$
	give
	\[
	|x_i(t) - x_j(t)| \leq  2\Gamma_{\infty} E_0 e^{-\omega t}
	\]
	for all $t \geq 0$ and $i,j \in \mathcal{N}$. Thus,
	$\Sigma_{\mathrm{MAS}}$
	achieves consensus exponentially  with decay rate $\omega$.
\end{proof}

Recall that the maximum decay parameter $\tilde \omega$
is the minimum of
\begin{equation*}
\eta_i - \frac{W(\xi_i e^{\eta_i \tau_{\max}^i})}{\tau_{\max}^i},\quad 
i \in \mathcal{N},
\end{equation*}
which is the solution of the  equation
$
\omega = \eta_i - \xi_i e^{\omega \tau_{\max}^i};
$
see the proof of Lemma~\ref{lem:omega_cond}.
Moreover, $\xi_i$  becomes smaller as $d_i/R$ decreases, and
$\eta_i$ becomes larger as $\delta_i$ decreases.
Therefore,  $\tilde \omega$ becomes larger as
$d_i$, $\delta_i$, and $\tau_{\max}^i$ decreases and as 
$R$ increases. This also means that 
if agent~$i$ has a large $d_i$, i.e., many neighbors, then
we need to use small $\delta_i$ and $\tau_{\max}^i$ in order to
achieve fast consensus of the multi-agent system.

\begin{remark}
	\normalfont
	The condition on the lower bound $\tau_{\min}^i$  in Assumption~\ref{assump:paramter}.b) is not used when
	each agent computes the next sampling time; see Section~\ref{sec:self_triggered_computation}
	for details. 
	Therefore, Theorem~\ref{thm:main_result} essentially shows that 
	asymptotic consensus is achieved
	if 
	\eqref{eq:threshold_q_level_cond} holds for each $i \in \mathcal{N}$ and 
	if $0 < \omega \leq \tilde \omega$
	for given upper bounds $\tau^1_{\max},\dots,\tau^N_{\max}$
	of inter-event times.
\end{remark}
\begin{remark}
	\normalfont
	\label{rem:global_data}
	To check the conditions obtained in Theorem~\ref{thm:main_result}, 
	the global network parameters,  $\lambda_2(L)$ and
	$\Gamma_{\infty}$, are needed. In addition,
	the quantization range $E(t)$ is common to all agents as 
	the scaling parameter of finite-level dynamic quantizers studied,
	e.g., in the previous works~\cite{Li2011, You2011}.
	These are drawbacks of the proposed method.
\end{remark}

\begin{remark}
	\normalfont
	Although
	the proposed method is inspired by the self-triggered consensus algorithm presented
	in~\cite{Fan2015}, 
	the approach to consensus analysis differs.
	In \cite{Fan2015}, a Lyapunov function and LaSalle's invariance principle have been
	employed. In contrast,
	we develop a trajectory-based approach, where 
	the semi-contractivity property of $e^{-Lt}$ plays a key role.
	Moreover, we discuss the convergence speed of consensus, by using
	the global parameters mentioned in Remark~\ref{rem:global_data} above.
	The utilization of the global  parameters also enables us to investigate
	the minimum inter-event time in a way different from that of \cite{Fan2015}.
\end{remark}

\subsection{Bounds of $\Gamma_{\infty}$}
We use the constant $\Gamma_{\infty}$ in the definition \eqref{eq:E_def} of $E(t)$ and the conditions for consensus given in Assumption~\ref{assump:paramter}.
To apply the proposed method, 
we have to compute
$\Gamma_{\infty}$ numerically by \eqref{eq:Gamma_inf_def} or 
replace $\Gamma_{\infty}$ with an available upper bound of $\Gamma_{\infty}$.
In the next proposition, we provide bounds of $\Gamma_{\infty}$ by using the network size.
The proof can be found in Appendix A.

\begin{proposition}
	\label{prop:Gamma_inf_bound}
	Let $N \in \mathbb{N}$ satisfy $N \geq 2$ and
	let $G$ be a connected undirected graph with $N$ vertices.
	Define $L \coloneqq L(G)$.
	Then the following statements hold for $\Gamma_{\infty}(\gamma)$
	defined as in \eqref{eq:Gamma_inf_def}: 
	\begin{enumerate}
		\renewcommand{\labelenumi}{\alph{enumi})}
		\item 
		For all $0 < \gamma \leq \lambda_2(L)$, 
		\begin{equation*}
		2-\frac{2}{N} \leq \Gamma_{\infty}(\gamma) \leq N - 1.
		\end{equation*}
		\item If $G$ is a complete graph, then
		\[
		\Gamma_{\infty}(\gamma) = 2-\frac{2}{N}
		\]
		for all $0 < \gamma \leq \lambda_2(L) = N$.
	\end{enumerate}
\end{proposition}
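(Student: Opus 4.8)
The plan is to reduce everything to the spectral decomposition of $e^{-Lt}-\mathbf{1}\bar{\mathbf{1}}$ already obtained in the proof of Lemma~\ref{lem:Ptau_bound}. Let $u_1 = \mathbf{1}/\sqrt{N},\, u_2,\dots,u_N$ be an orthonormal system of eigenvectors of $L$ associated with the eigenvalues $0 = \lambda_1 < \lambda_2 \le \cdots \le \lambda_N$, so that the columns of $V$ in that proof are $u_2,\dots,u_N$. Since $u_1 u_1^{\top} = \mathbf{1}\bar{\mathbf{1}}$, the identity $e^{-Lt} = \sum_{j=1}^N e^{-\lambda_j t} u_j u_j^{\top}$ gives
\[
M(t) \coloneqq e^{\gamma t}(e^{-Lt}-\mathbf{1}\bar{\mathbf{1}}) = \sum_{j=2}^N e^{(\gamma-\lambda_j)t}\, u_j u_j^{\top}.
\]
Because $\gamma \le \lambda_2 \le \lambda_j$, each coefficient $c_j(t) \coloneqq e^{(\gamma-\lambda_j)t}$ lies in $(0,1]$ for $t\ge 0$. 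The lower bound $2-\tfrac{2}{N}\le \Gamma_{\infty}(\gamma)$ in part~a) is already recorded in \eqref{eq:Gamma_inf_bound} by evaluating at $t=0$, so only the upper bound remains to be shown there.

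For the upper bound I would estimate $\|M(t)\|_{\infty} = \max_i \sum_k |M_{ik}(t)|$ entry by entry. Discarding the coefficients via $0<c_j(t)\le 1$ and applying the Cauchy--Schwarz inequality yields
\[
|M_{ik}(t)| = \left| \sum_{j=2}^N c_j(t)\,(u_j)_i (u_j)_k \right| \le \left( \sum_{j=2}^N (u_j)_i^2 \right)^{1/2} \left( \sum_{j=2}^N (u_j)_k^2 \right)^{1/2}.
\]
Since $VV^{\top} = I - \mathbf{1}\bar{\mathbf{1}}$, orthonormality gives $\sum_{j=2}^N (u_j)_i^2 = (VV^{\top})_{ii} = 1-\tfrac{1}{N}$ for every $i$, whence $|M_{ik}(t)| \le 1-\tfrac{1}{N}$ uniformly. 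Summing over the $N$ columns produces $\sum_k |M_{ik}(t)| \le N\bigl(1-\tfrac{1}{N}\bigr) = N-1$ for all $i$ and $t$, and taking the supremum over $t$ gives $\Gamma_{\infty}(\gamma)\le N-1$. The step I expect to carry the weight is this uniform entrywise estimate: the point is to throw away the $c_j(t)\le 1$ at precisely the right moment so that the surviving bound is controlled purely by the orthonormality of the eigenvectors, hence independent of the particular graph; any attempt to track the individual $c_j(t)$ or the actual eigenvector entries would make the argument graph-dependent and unwieldy.

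For part~b) I would use the special eigenvalue structure of the complete graph, whose Laplacian is $L = NI - \mathbf{1}\mathbf{1}^{\top}$ with nonzero eigenvalues all equal to $N$, so $\lambda_2(L)=N$. Then every coefficient collapses to the common value $c_j(t) = e^{(\gamma-N)t}$, and the decomposition reduces to $M(t) = e^{(\gamma-N)t}(I-\mathbf{1}\bar{\mathbf{1}})$. Consequently $\|M(t)\|_{\infty} = e^{(\gamma-N)t}\,\|I-\mathbf{1}\bar{\mathbf{1}}\|_{\infty} = e^{(\gamma-N)t}\bigl(2-\tfrac{2}{N}\bigr)$, which for $0<\gamma\le N$ is nonincreasing in $t$ and therefore maximized at $t=0$. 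Taking the supremum gives $\Gamma_{\infty}(\gamma) = 2-\tfrac{2}{N}$, matching the lower bound from \eqref{eq:Gamma_inf_bound} exactly.
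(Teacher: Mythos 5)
Your proof is correct and follows essentially the same route as the paper: the same spectral decomposition $e^{\gamma t}(e^{-Lt}-\mathbf{1}\bar{\mathbf{1}})=Ve^{(\gamma I-\Lambda)t}V^{\top}$, the lower bound obtained at $t=0$, and the same collapse of the nonzero spectrum to $N$ for the complete graph in part~b). The only cosmetic difference is in the upper bound of part~a), where you bound the entries of $M(t)$ via Cauchy--Schwarz and the diagonal of $VV^{\top}=I-\mathbf{1}\bar{\mathbf{1}}$, while the paper uses submultiplicativity $\|Ve^{(\gamma I-\Lambda)t}V^{\top}\|_{\infty}\leq \|V\|_{\infty}\|V^{\top}\|_{\infty}$ together with $\|V_0\|_{\infty}\leq\sqrt{N}$; both estimates land at $N-1$.
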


We conclude this section by using
Proposition~\ref{prop:Gamma_inf_bound}.b) 
to examine the relationship between
the network size of complete graphs and the design parameters 
for quantization and self-triggered sampling.
For real-valued functions $\Phi,\Psi$ on $\mathbb{N}$, we write
\[
\Phi(N) = \Theta\big(\Psi (N) \big)\quad \text{as $N \to \infty$}
\]
if there are $C_1,C_2 >0$ and $N_0 \in \mathbb{N}$ such that 
for all $N \geq N_0$,
\[
C_1 \Psi(N) \leq \Phi(N) \leq C_2\Psi(N).
\] 
\begin{example}
	\normalfont
	Let $G$ be a complete graph with $N$ vertices.
	
	\textit{Sensing accuracy:}
	By Proposition~\ref{prop:Gamma_inf_bound}.b), one 
	can set 
	\[
	\gamma = N
	\text{~~and~~}
	\Gamma_{\infty} = 2-\frac{2}{N}.
	\]
	We see from the condition \eqref{eq:threshold_q_level_cond}  that if
	the number $R$ of the quantization levels satisfies
	\begin{equation}
	\label{eq:N_cond_comp}
	R >  \frac{2d_i \Gamma_{\infty}}{\gamma} = \frac{4(N-1)^2}{N^2},
	\end{equation}
	then the quantized self-triggered multi-agent system
	achieves consensus exponentially for some threshold $\delta_i$.
	Hence,  
	the required sensing accuracy  for asymptotic consensus is $\Theta(1)$ as 
	$N \to \infty$.
	
	\textit{Number of indices for data transmission:}
	Recall that the agents send the sum of 
	relative state measurements to all neighbors for the computation of
	sampling times.
	The number of indices used for this communication 
	is 
	\[
	2\tilde d R_0+1,
	\] 
	where $R_0\in \mathbb{N}_0$  and
	$\tilde d\in \mathbb{N}$ 
	satisfy 
	$R = 2R_0 + 1$ and 
	\[
	d_i = N-1 \leq \tilde d
	\]
	for all $i \in \mathcal{N}$, respectively.
	Hence, the required number of indices for asymptotic consensus
	is $\Theta(N)$ as 
	$N \to \infty$.
	
	\textit{Threshold for sampling:}
	We see from the condition \eqref{eq:threshold_q_level_cond} 
	that the threshold $\delta_i$
	of the triggering mechanism \eqref{eq:STM} 
	of agent~$i$ has to satisfy
	\[
	\delta_i < 
	\frac{\gamma}{2\Gamma_{\infty}} -
	\frac{d_i}{R} = 
	\frac{N^2}{2(N-1)} - \frac{N-1}{R}.
	\]
	Combining this inequality with \eqref{eq:N_cond_comp},
	we have that the required threshold for asymptotic consensus
	is $\Theta(N)$ 
	as 
	$N \to \infty$.
\end{example}

\section{Computation of Sampling Times}
\label{sec:self_triggered_computation}
In this section,
we describe how the agents  compute sampling times 
in a self-triggered fashion.
We discuss an initial candidate of the next sampling time and then
the first update of the candidate, followed by the $p$-th update.
Finally, we present a  joint algorithm for 
quantization and self-triggering sampling.

Let 
$i \in \mathcal{N}$ and 
$k \in \mathbb{N}_0$. Define $\tilde \tau_{\min}^i$ 
by
\eqref{eq:tau_min_def_usemin}.
By Proposition~\ref{prop:basic_property}.d) and f), there exists $\ell_0 \in \mathbb{N}_0$
such that $t_k^i = t_{\ell_0}$.

\subsection{Initial candidate of the next sampling time}
\label{sec:initial_candidate}
First,
agent~$i$ updates $q_i$ at time $t = t_{\ell_0}$.
If the neighbor~$j$ also updates 
$q_j$ at time $t = t_{\ell_0}$, then
agent~$i$ receives $q_j$.
Next, agent~$i$ computes
a candidate of the inter-event time,
\[
\tau_{k,0}^i \coloneqq \min\{ \tilde \tau_{k,0}^i, \,\tau_{\max}^i \},
\]
where
\begin{align*}
\tilde \tau_{k,0}^i \coloneqq 
\inf\Bigg\{
\tau >0:
\Bigg| \tau d_i q_i(t_k^i) - \tau \sum_{j \in \mathcal{N}_i}
q_j(t^j_{k_j(\ell_0)}) \Bigg|   \geq  \delta_i e^{-\omega \tau} E(t_{\ell_0})
\Bigg\} .
\end{align*}
By \eqref{eq:qi_bound} and \eqref{eq:qj_bound}, $\tilde \tau_{k,0}^i \geq \tilde \tau_{\min}^i$.
Agent~$i$ takes $t_k^i + \tau_{k,0}^i$
as an initial candidate of the next sampling time.
If 
agent~$i$ does not receive an updated $q_j$ from any neighbors~$j$
on the interval $(t_k^i,t_k^i + \tau_{k,0}^i)$,
then $t_k^i +\tau_{k,0}^i $ is the next sampling time,
that is, agent~$i$ updates $q_i$ at $t = t_k^i +\tau_{k,0}^i $.

Using the Lambert $W$-function,
one can write $\tau_{k,0}^i$ more explicitly.
To see this, we first note that the solution $\tau = \tau^*$
of the equation 
\[
a\tau + c = b e^{-\omega \tau},\quad a,b>0,\, c \in \mathbb{R}
\] 
is written as 
\[
\tau^* =  \dfrac{1}{\omega}W \left(
\dfrac{\omega b}{a } e^{
	\omega c /a
}
\right) - \dfrac{c }{a}.
\]
Define  the function $\phi_0$ by
\begin{equation*}
\phi_0(a,b,c) \coloneqq 
\begin{cases}
\dfrac{1}{\omega}W \left(
\dfrac{\omega b}{|a| } e^{
	\omega c /a
}
\right) - \dfrac{c }{a} & \text{if $a \not=0$} \vspace{6pt} \\
\dfrac{1}{\omega} \log
\dfrac{b}{|c|}
& \text{if $a =0$ and $c \not= 0$}\vspace{4pt} \\
\infty & \text{if $a =0$ and $c = 0$}
\end{cases}
\end{equation*}
for $a,c \in \mathbb{R}$ and $b >0$.
We also set
\begin{align}
a_{k,0}^i &\coloneqq 
d_i q_i(t_k^i) - \sum_{j \in \mathcal{N}_i}
q_j(t^j_{k_j(\ell_0)})  \notag \\
b_{k,0}^i &\coloneqq \delta_i E(t_{\ell_0}) 
\label{eq:abck0}\\
c_{k,0}^i &\coloneqq 0. \notag 
\end{align}
Since
\begin{align*}
\tilde \tau_{k,0}^i =
\inf\big\{
&\tau >0: 
|a_{k,0}^i| \tau + c_{k,0}^i
\geq   b_{k,0}^i e^{-\omega \tau} 
\big\},
\end{align*}
we have 
$
\tilde \tau_{k,0}^i = 
\phi_0(a_{k,0}^i,b_{k,0}^i,c_{k,0}^i) 
$.
Hence
\begin{equation}
\label{eq:til_tauk0}
\tau_{k,0}^i = \min \{\phi_0(a_{k,0}^i,b_{k,0}^i,c_{k,0}^i) ,\,\tau_{\max}^i   \}.
\end{equation}

\subsection{First update}
\label{sec:sec_up}
If agent~$i$ receives an updated $q_j$ from some neighbor~$j$
by $t = t_k^i + \tau_{k,0}^i $,
then agent~$i$ must recalculate a candidate of the next sampling
time as in the self-triggered method proposed in \cite{Fan2015}.
We will now consider this scenario, i.e.,  the case
\[
\{
\ell \in \mathbb{N}:
t_k^i < t_{\ell} < t_k^i + \tau_{k,0}^i \text{~~and~~} \mathcal{I}(\ell) \cap 
\mathcal{N}_i  \not= \emptyset \} \not= \emptyset,
\]
where $\mathcal{I}(\ell)$ is defined as in Section~\ref{sec:sampling}.
Let $t_{\ell_1} \in (t_k^i, t_k^i + \tau_{k,0}^i)$ 
be the first instant at which 
agent~$i$ receives updated data
after $t=t_k^i$.
Since $t_k^i = t_{\ell_0}$, one can write $\ell_1$ as
\[
\ell_1 =
\min \{ \ell \in \mathbb{N}: \ell> \ell_0\text{~~and~~}\mathcal{I}(\ell) \cap 
\mathcal{N}_i  \not= \emptyset \}.
\]
Note that 
agent~$i$ may receive updated data from several neighbors at time
$t = t_{\ell_1}$.

By using the new data,
agent~$i$ computes the following inter-event time at time
$t = t_{\ell_1}$:
\[
\tau_{k,1}^i \coloneqq \min \{\tilde 
\tau_{k,1}^i + (t_{\ell_1}-t_{\ell_0}) , \, \tau_{\max}^i   \},
\]
where
\begin{align*}
\tilde \tau_{k,1}^i \coloneqq 
\inf\Bigg\{
\tau >0: \Bigg|
\tau d_i q_i(t_k^i) - \tau \sum_{j \in \mathcal{N}_i}q_j(t^j_{k_j(\ell_1)}) +(t_{\ell_1} - t_{\ell_0}) a_{k,0}^i 
\Bigg| \geq  \delta_i e^{-\omega \tau} E(t_{\ell_1})
\Bigg\} .
\end{align*}
Then $t_k^i +  \tau_{k,1}^i $ is 
a new candidate of the next sampling time.
By \eqref{eq:qi_bound} and \eqref{eq:qj_bound}, we obtain
\[
\tilde \tau_{k,1}^i+ (t_{\ell_1}-t_{\ell_0}) \geq \tilde \tau_{\min}^i.
\]
As in the initial case, if
agent~$i$ does not receive an updated $q_j$ from any neighbors~$j$
on the interval $(t_{\ell_1},t_k^i + \tau_{k,1}^i)$,
then $t_k^i + \tau_{k,1}^i$ is the next sampling time.
Otherwise, agent~$i$ computes the next sampling time again
in the same way.

One can rewrite $\tilde \tau_{k,1}^i$ by using the Lambert $W$-function.  
To see this, we define
\begin{align*}
a_{k,1}^i &\coloneqq 
d_i q_i(t_k^i) - \sum_{j \in \mathcal{N}_i}
q_j(t^j_{k_j(\ell_1)})  \\
b_{k,1}^i &\coloneqq \delta_i E(t_{\ell_1}) \\
c_{k,1}^i &\coloneqq (t_{\ell_1} - t_{\ell_0}) a_{k,0}^i.
\end{align*}
Then
\begin{align*}
\tilde \tau_{k,1}^i =
\inf\big\{
&\tau >0: \big|
a_{k,1}^i \tau + c_{k,1}^i
\big| \geq  b_{k,1}^i e^{-\omega \tau} 
\big\} .
\end{align*}
From the definition of $\tilde \tau_{k,0}^i$ and $t_{\ell_1} - t_{\ell_0} <
\tilde \tau_{k,0}^i$, we obtain
\[
|c_{k,1}^i| <  b_{k,1}^i.
\]
If the product $a_{k,1}^i  c_{k,1}^i$ satisfies  $a_{k,1}^i  c_{k,1}^i 
\geq 0$, then the condition
\begin{equation}
\label{eq:abc_cond}
\big|
a_{k,1}^i \tau + c_{k,1}^i
\big| \geq b_{k,1}^i e^{-\omega \tau} 
\end{equation}
can be written as
\[
|a_{k,1}^i| \tau + |c_{k,1}^i| \geq b_{k,1}^i e^{-\omega \tau},
\]
and hence $\tilde \tau_{k,1}^i =
\phi_0 (a_{k,1}^i,b_{k,1}^i,c_{k,1}^i )$.

Next we consider the case $a_{k,1}^i  c_{k,1}^i < 0$.
In this case,  the condition \eqref{eq:abc_cond}
is equivalent to
\begin{equation*}
\begin{cases}
-|a_{k,1}^i| \tau + |c_{k,1}^i| \geq  b_{k,1}^i e^{-\omega \tau} & \text{for $\displaystyle 
	\tau \leq  -c_{k,1}^i/a_{k,1}^i$} \vspace{3pt}\\
|a_{k,1}^i| \tau - |c_{k,1}^i| \geq b_{k,1}^i e^{-\omega \tau} &  \text{for $\tau > -c_{k,1}^i/a_{k,1}^i$}.
\end{cases}
\end{equation*}
For the latter inequality,
we have that
\begin{align*}
\inf\{ \tau > -c_{k,1}^i/a_{k,1}^i :
|a_{k,1}^i| \tau - |c_{k,1}^i| \geq b_{k,1}^i e^{-\omega \tau} 
\}
&=\inf\{ \tau > 0 :
|a_{k,1}^i| \tau - |c_{k,1}^i| \geq b_{k,1}^i e^{-\omega \tau} 
\}\\
& =
\phi_0 (a_{k,1}^i,b_{k,1}^i,c_{k,1}^i ).
\end{align*}
It may also occur that 
\[
\{
0< \tau \leq  -c_{k,1}^i/a_{k,1}^i:
-|a_{k,1}^i| \tau + |c_{k,1}^i| \geq  b_{k,1}^i e^{-\omega \tau}\}
\not= \emptyset.
\]
To see this, we first observe that
\begin{align}
\{
0< \tau \leq  -c_{k,1}^i/a_{k,1}^i:
-|a_{k,1}^i| \tau + |c_{k,1}^i| \geq b_{k,1}^i e^{-\omega \tau}\} =
\{
\tau >0:
-|a_{k,1}^i| \tau + |c_{k,1}^i| \geq b_{k,1}^i e^{-\omega \tau}\}.
\label{eq:ac_negative_set}
\end{align}
Let $W_{-1}$ be the secondary branch of the Lambert $W$-function, i.e.,
$W_{-1}(y)$ is the solution $x \leq -1$ of the equation 
$xe^{x} = y$ for $y \in [-e^{-1},0)$.
We obtain the infimum of the set in
\eqref{eq:ac_negative_set} from the following proposition, whose proof
is given in Appendix B.
\begin{proposition}
	\label{prop:W1_case}
	Let $0 < c < b$. Then
	\begin{align*}
	\inf \{
	\tau >0:
	-\tau + c \geq  b e^{-\tau}\}  =
	\begin{cases}
	W_{-1}(-be^{-c}) + c & \text{if $1 < b \leq e^{-1+c}$} \\
	\infty & \text{otherwise}.
	\end{cases}
	\end{align*}
\end{proposition}
To apply Proposition~\ref{prop:W1_case}, note that
\[
-|a_{k,1}^i| \tau + |c_{k,1}^i| \geq  b_{k,1}^i e^{-\omega \tau} 
\]
if and only if
\[
- \hat \tau - \frac{\omega c_{k,1}^i}{a_{k,1}^i} \geq 
\frac{ \omega b_{k,1}^i}{|a_{k,1}^i|}  
e^{-\hat \tau},
\quad \text{where $\hat \tau \coloneqq \omega \tau$}.
\]
Define the function $\phi$ by
\[
\phi(a,b,c) \coloneqq
\begin{cases}
\dfrac{1}{\omega}W_{-1} \left(
-\dfrac{\omega b}{|a| } e^{
	\omega c /a
}
\right) - \dfrac{c }{a} & \text{if $(a,b,c) \in \Upsilon_{\omega}$} \vspace{4pt}\\
\phi_0(a,b,c) & \text{if $(a,b,c) \not\in \Upsilon_{\omega}$}
\end{cases}
\]
for $a,c, \in \mathbb{R}$ and $b>0$,
where the set $\Upsilon_{\omega}$ is given by
\[
\Upsilon_{\omega} \coloneqq
\left\{
(a,b,c) : ac <0 \text{~~and~~} 
1 < \frac{\omega b}{|a|}  \leq e^{-1-\omega c/a}
\right\}.
\]
From Proposition~\ref{prop:W1_case},
we conclude that 
\[
\tilde \tau_{k,1}^i = 
\phi (a_{k,1}^i,b_{k,1}^i,c_{k,1}^i )
\]
in both cases  $a_{k,1}^i  c_{k,1}^i 
\geq 0$ and  $a_{k,1}^i  c_{k,1}^i 
< 0$.

\subsection{$p$-th update}
Let $p \in \mathbb{N}$ and let  
\[
t_{\ell_0} < t_{\ell_1} < \dots < t_{\ell_p} < t_{\ell_0} +\tau_{\max}^i.
\]
We consider the case where
agent~$i$ receives new data from its
neighbors
at times $t = t_{\ell_1},\dots,t_{\ell_p}$ before 
the next candidate sampling times.

At time $t = t_{\ell_p}$, agent~$i$ computes
\begin{align}
\label{eq:til_taukp}
\tau_{k,p}^i \coloneqq
\min \{\phi (a_{k,p}^i,b_{k,p}^i,c_{k,p}^i )+ (t_{\ell_p}-t_{\ell_0}) ,\, 
\tau_{\max}^i   \},
\end{align}
where 
\begin{align}
a_{k,p}^i &\coloneqq  d_i q_i(t_k^i) - \sum_{j \in \mathcal{N}_i}
q_j(t^j_{k_j(\ell_p)}) \notag \\
b_{k,p}^i &\coloneqq \delta_i E(t_{\ell_p}) \label{eq:abckp}\\
c_{k,p}^i &\coloneqq  c_{k,p-1}^i + 
(t_{\ell_{p}} - t_{\ell_{p-1}}) a_{k,p-1}^i, \notag 
\end{align}
and takes $t_k^i + \tau_{k,p}^i$ as a new 
candidate of the next sampling time.
We have 
\begin{equation}
\label{eq:phi_lower_bound}
\phi (a_{k,p}^i,b_{k,p}^i,c_{k,p}^i )+ (t_{\ell_p}-t_{\ell_0}) \geq 
\tilde \tau_{\min}^i
\end{equation}
as  in the first update explained above.
Since $\tau_{k,p}^i$
satisfies 
\[
\tilde \tau_{\min}^i \leq \tau_{k,p}^i \leq \tau_{\max}^i,
\]
only a finite number of data transmissions 
from neighbors occur until the next sampling time.

The next theorem shows that when the neighbors 
do not update the measurements on the interval
$(t_{\ell_p}, t_k^i + \tau_{k,p}^i)$, 
the candidate $t_k^i + \tau_{k,p}^i$  of the next sampling times
constructed as above coincides with the next sampling time $t_{k+1}^i$ computed from the triggering mechanism \eqref{eq:STM}.
\begin{theorem}
	\label{thm:sampling_time_coincidence}
	Let $i \in \mathcal{N}$ and $k,p \in \mathbb{N}_0$. 
	Let $t_{\ell_0},\dots,t_{\ell_p}$ and $\tau_{k,p}^i$ be as above, and assume that 
	agent~$i$ does not receive any measurements
	from its neighbors on the interval
	$(t_{\ell_p}, t_k^i + \tau_{k,p}^i)$. 
	Then
	\begin{equation}
	\label{eq:next_sampling_time}
	t_k^i + \tau_{k,p}^i = t_{k+1}^i,
	\end{equation}
	where $t_{k+1}^i$ is defined by \eqref{eq:STM} with
	$
	0< \tau_{\min}^i \leq 
	\min \{ \tilde \tau_{\min}^i, \,\tau_{\max}^i\}.
	$
\end{theorem}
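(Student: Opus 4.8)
The plan is to write the map $\tau \mapsto f_k^i(\tau)$ from \eqref{eq:fik_def} in closed form and match it, segment by segment, against the threshold $\delta_i E(t_k^i + \tau)$ that defines $\tau_k^i$ in \eqref{eq:STM}. Write $T_m \coloneqq t_{\ell_m} - t_{\ell_0}$ for $m = 0,1,\dots,p$. Since $t_{\ell_{m+1}}$ is, by construction, the first instant after $t_{\ell_m}$ at which any neighbor of agent~$i$ updates, Proposition~\ref{prop:basic_property} shows that $k_j(\ell)$ is constant for every $j \in \mathcal{N}_i$ on $\ell_m \leq \ell < \ell_{m+1}$, so each input $u_j$ with $j \in \mathcal{N}_i$ equals $-q_j(t^j_{k_j(\ell_m)})$ on $[t_{\ell_m}, t_{\ell_{m+1}})$. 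Integrating the piecewise-constant inputs in \eqref{eq:fik_def} then gives, for $\tau$ with $t_k^i + \tau \in [t_{\ell_m}, t_{\ell_{m+1}}]$,
\[
f_k^i(\tau) = -\big(a_{k,m}^i(\tau - T_m) + c_{k,m}^i\big),
\]
with $a_{k,m}^i$ and $c_{k,m}^i$ as in \eqref{eq:abckp}; the recursion $c_{k,m+1}^i = c_{k,m}^i + (T_{m+1}-T_m)a_{k,m}^i$ is exactly what makes this expression continuous across the update times. Since $E(t_k^i + \tau) = E(t_{\ell_m})\,e^{-\omega(\tau - T_m)}$, the triggering inequality $|f_k^i(\tau)| \geq \delta_i E(t_k^i + \tau)$ on the $m$-th segment becomes $|a_{k,m}^i \sigma + c_{k,m}^i| \geq b_{k,m}^i e^{-\omega \sigma}$ with $\sigma \coloneqq \tau - T_m$, whose infimum of positive solutions is precisely $\phi(a_{k,m}^i, b_{k,m}^i, c_{k,m}^i)$.

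With this dictionary in place, I would establish \eqref{eq:next_sampling_time} by showing that the first crossing of the triggering inequality occurs at $\tau = T_p + \phi(a_{k,p}^i, b_{k,p}^i, c_{k,p}^i)$. On each earlier segment $m < p$, the hypothesis that the update $t_{\ell_{m+1}}$ arrives strictly before the candidate $t_k^i + \tau_{k,m}^i$ gives $T_{m+1} - T_m < \phi(a_{k,m}^i, b_{k,m}^i, c_{k,m}^i)$ via \eqref{eq:til_taukp}; because $\phi$ is the infimum of the set where the inequality holds, the inequality fails for every $\sigma \in [0, T_{m+1}-T_m)$, so no crossing occurs on $[T_m, T_{m+1})$. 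On the final segment, the assumption that no neighbor updates on $(t_{\ell_p}, t_k^i + \tau_{k,p}^i)$ keeps the closed form valid up to the candidate, so the first crossing there is at $\sigma = \phi(a_{k,p}^i, b_{k,p}^i, c_{k,p}^i)$, i.e.\ at $\tau = T_p + \phi(a_{k,p}^i, b_{k,p}^i, c_{k,p}^i)$. I would then reconcile the constraints in \eqref{eq:STM}: the lower bound \eqref{eq:phi_lower_bound} together with Assumption~\ref{assump:paramter}.b) yields $\tau_{k,p}^i \geq \tilde\tau_{\min}^i \geq \tau_{\min}^i$, so the restriction $\tau \geq \tau_{\min}^i$ is inactive; and if $T_p + \phi(a_{k,p}^i,b_{k,p}^i,c_{k,p}^i) \geq \tau_{\max}^i$, then no crossing occurs before $\tau_{\max}^i$ and the clamping in both \eqref{eq:STM} and \eqref{eq:til_taukp} returns $\tau_{\max}^i$. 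In every case $t_{k+1}^i = t_k^i + \min\{\tau_k^i, \tau_{\max}^i\} = t_k^i + \tau_{k,p}^i$.

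The main obstacle is the correctness of $\phi$ as a description of the first crossing in the non-monotone regime $a_{k,m}^i c_{k,m}^i < 0$, where $|a_{k,m}^i \sigma + c_{k,m}^i|$ first decreases to zero and then increases while $b_{k,m}^i e^{-\omega\sigma}$ decreases monotonically. Here the first crossing may occur on the descending branch before the sign change of $a_{k,m}^i \sigma + c_{k,m}^i$, which is exactly the situation captured by the $W_{-1}$ branch through the set $\Upsilon_\omega$ and Proposition~\ref{prop:W1_case}; I would invoke these to certify that $\phi(a_{k,m}^i, b_{k,m}^i, c_{k,m}^i)$ is indeed the infimum of $\{\sigma > 0 : |a_{k,m}^i \sigma + c_{k,m}^i| \geq b_{k,m}^i e^{-\omega\sigma}\}$ in all cases, so that the segment-wise matching above is valid. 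A secondary but routine difficulty is the index bookkeeping needed to collapse the integrals in \eqref{eq:fik_def} into the coefficients \eqref{eq:abckp} and to verify continuity of $f_k^i$ at each $t_{\ell_m}$.
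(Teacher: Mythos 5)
Your proposal is correct and follows essentially the same route as the paper's proof: the paper likewise reduces the triggering condition on the final segment to $\phi(a_{k,p}^i,b_{k,p}^i,c_{k,p}^i)$, notes that no crossing occurs for $\tau\in[0,t_{\ell_p}-t_{\ell_0}]$, and uses \eqref{eq:phi_lower_bound} to discharge the $\tau_{\min}^i$ constraint, deferring the piecewise-affine bookkeeping to Sections~\ref{sec:initial_candidate} and \ref{sec:sec_up}. Your write-up simply makes that segment-by-segment closed form and the continuity role of the recursion for $c_{k,m}^i$ explicit, which is consistent with the paper's argument.
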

\begin{proof}
	By the definition of $t_{\ell_p}$, we obtain
	\[
	|f_k^i(\tau)| <  \delta_i E( t_{\ell_0} + \tau)
	\]
	for all $\tau \in [0,t_{\ell_p}-t_{\ell_0}]$.
	Moreover, the arguments given in Sections~\ref{sec:initial_candidate}
	and \ref{sec:sec_up} show that
	\begin{align*}
	\phi (a_{k,p}^i,b_{k,p}^i,c_{k,p}^i )   = \inf\{
	\tau > 0:
	|f_k^i(t_{\ell_p}-t_{\ell_0} + \tau )| \geq  \delta_i E(t_{\ell_p}+\tau)
	\}.
	\end{align*}
	From these facts, it follows that
	\begin{align*}
	\phi (a_{k,p}^i,b_{k,p}^i,c_{k,p}^i ) + (t_{\ell_p}-t_{\ell_0})  
	&= \inf\{
	\tau > t_{\ell_p}-t_{\ell_0}:
	|f_k^i(\tau )| \geq  \delta_i E(t_{\ell_0}+\tau)
	\} \\
	& =
	\inf\{
	\tau > 0:
	|f_k^i(\tau )| \geq  \delta_i E(t_{\ell_0}+\tau)
	\}.
	\end{align*}
	Combining this
	with the inequality \eqref{eq:phi_lower_bound},
	we obtain
	\begin{align*}
	\phi (a_{k,p}^i,b_{k,p}^i,c_{k,p}^i )+ (t_{\ell_p}-t_{\ell_0})  =
	\inf\{
	\tau \geq \tau^i_{\min} :
	|f_k^i(\tau)| \geq \delta_i E(t_k^i+\tau)
	\} = \tau_k^i
	\end{align*}
	for all $0< \tau_{\min}^i \leq \min \{ \tilde \tau_{\min}^i, \,\tau_{\max}^i\}$,
	where $\tau_k^i$ is defined as in \eqref{eq:STM}.
	Thus, we obtain the desired result \eqref{eq:next_sampling_time}.
\end{proof}

\subsection{Algorithm for quantization
	and self-triggered sampling}
We are now ready to present a joint algorithm for finite-level dynamic 
quantization and self-triggered sampling. 
Under this algorithm,  
the unsaturation condition
\eqref{eq:E_bound} is satisfied 
for all $k \in \mathbb{N}_0$ and $i \in \mathcal{N}$,
and the multi-agent system achieves
consensus  exponentially  with decay rate $\omega$;
see Theorems~\ref{thm:main_result} and \ref{thm:sampling_time_coincidence}. Moreover, the inter-event times 
$t_{k+1}^i - t_k^i$
are bounded from below 
by the constant $\tilde \tau_{\min}^i > 0$  for all 
$k \in \mathbb{N}_0$ and $i \in \mathcal{N}$.

\begin{algorithm}[Action of agent~$i$ on the sampling interval $t_k^i \leq t <  
	t_{k+1}^i$] \normalfont \mbox{}\vspace{2pt} \\
	\noindent
	\textbf{Step 0.}
	Choose the threshold 
	$\delta_i>0$ 
	and the number $R=2R_0+1$, $R_0 \in \mathbb{N}_0$,
	of quantization levels such that the inequality
	\eqref{eq:threshold_q_level_cond} holds
	for all $i \in \mathcal{N}$.
	Choose
	the upper bounds $\tau^1_{\max},\dots,\tau^N_{\max}>0$
	of inter-event times and
	the decay parameter $\omega$ of the quantization 
	range $E(t)$ such that $0 < \omega \leq
	\tilde \omega$, where $\tilde \omega$ is defined as in
	\eqref{eq:tilde_omega_cond}.
	
	\vspace{2pt}
	\noindent
	\textbf{Step 1.}
	At time $t = t_k^i \eqqcolon t_{\ell_0}$, agent~$i$
	performs the following actions~i)--v).
	\begin{enumerate}
		\renewcommand{\labelenumi}{\roman{enumi})}
		\item 
		Measure the quantized relative state $q_{ij}(t_k^i)$
		for all $j \in \mathcal{N}_i$ and deactivate the sensor.
		\item
		Encode the sum $q_i(t_k^i)$ of the quantized measurements 
		to an index in a finite set with cardinality $2\tilde d R_0+1$
		and transmit the index to each neighbor~$j \in \mathcal{N}_i$.
		\item If an index is received
		from a neighbor at time $t= t_{\ell_0}$, then decode the index and update the sum of the relative state measurements of the neighbor. 
		\item 
		Compute
		$\tau_{k,0}^i$ by \eqref{eq:til_tauk0}, where 
		$a_{k,0}^i$, $b_{k,0}^i$, and $c_{k,0}^i$ 
		are defined as in \eqref{eq:abck0}.
		\item
		Set $p = 0$.
	\end{enumerate}
	
	\vspace{2pt}
	\noindent
	\textbf{Step 2.}
	Agent~$i$ plans to activate the sensor at time $t = t_k^i +  \tau_{k,p}^i$.
	
	\vspace{3pt}
	\noindent
	\textbf{Step 3-a.}
	If agent~$i$ receives an index from some neighbor 
	on the interval $(t_{\ell_p}, t_k^i + \tau_{k,p}^i)$, then
	agent~$i$ performs the following actions~i)--iii). Then go back to 
	{\bf Step 2}.
	\begin{enumerate}
		\renewcommand{\labelenumi}{\roman{enumi})}
		\item
		Set $p$ to $p+1$ and
		store the time $t_{\ell_p}$ at which the index is received. 
		\item 
		Decode the index and update
		the sum of the relative state measurements of the neighbor. 
		If several indices are received at time $t = t_{\ell_p}$,
		then this action is applied to all indices.
		\item 
		Compute $\tau_{k,p}^i$ by \eqref{eq:til_taukp},
		where $a_{k,p}^i$, $b_{k,p}^i$, and $c_{k,p}^i$ 
		are defined as in \eqref{eq:abckp}.
	\end{enumerate}
	
	\vspace{3pt}
	\noindent
	\textbf{Step 3-b.}
	If agent~$i$ does not receive any indices
	on the interval $(t_{\ell_p}, t_k^i + \tau_{k,p}^i)$, then
	agent~$i$ sets $t_{k+1}^i \coloneqq t_k^i + \tau_{k,p}^i$.
	
	\vspace{3pt}
	\noindent
	\textbf{Step 4.}
	Agent~$i$ sets $k$ to $k+1$. Then go back to {\bf Step 1}.
\end{algorithm}

\begin{remark}
	\normalfont
	The proposed method takes advantage of the simplicity of 
	the first-order dynamics in the following way.
	Assume that the dynamics of agent~$i$ is given by
	\[
	\dot x_i(t) = Ax_i(t) +Bu_i(t),
	\]
	where $A \in \mathbb{R}^{n\times n}$ and 
	$B \in \mathbb{R}^{n\times m}$. Then 
	the error $x_i(t_k^i+\tau) - x_i(t_k^i)$ due to sampling is 
	written as
	\begin{align*}
	x_i(t_k^i+\tau) - x_i(t_k^i)  = (e^{A\tau} - I) x_i(t_k^i) + \int^{\tau}_{0}
	e^{A(\tau-s)}B u_i(t_k^i+s)  ds
	\end{align*}
	for $\tau \geq 0$.
	Since $e^{A\tau} - I \not =0$ in general,
	the absolute state $x_i(t_k^i)$ is  required to describe the error $x_i(t_k^i+\tau) - x_i(t_k^i)$.
	However,  one has $e^{A\tau} - I =0$ in the first-order case $A = 0$,
	and hence 
	the absolute state $x_i(t_k^i)$ needs not be measured in the proposed algorithm.
	Moreover, since the input $u_i$ is constant on the sampling interval,
	the integral term is a  linear function with respect to $\tau$ in the first-order case $A = 0$. This enables us to use the Lambert $W$-function for the computation of sampling times.
\end{remark}

\section{Numerical simulation}
\label{sec:numerical_sim}
In this section, we consider the connected network 
shown in Figure~\ref{fig:network_example}, where the number $N$ of agents is $N=6$.
\begin{figure}[!b]
	\centering
	\includegraphics[width = 2cm]{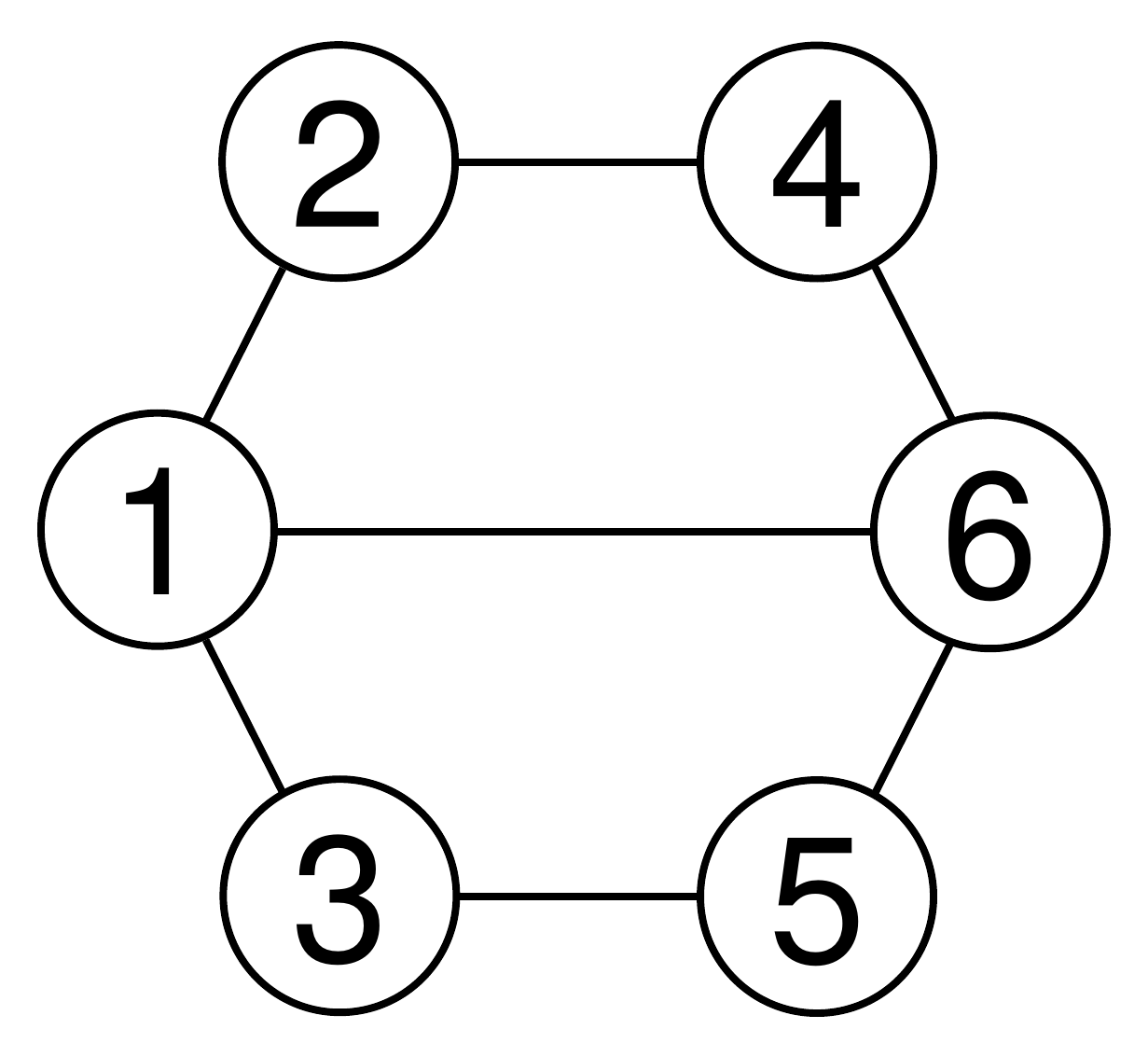}
	\caption{Network topology.}
	\label{fig:network_example}
\end{figure}
For each $i \in \mathcal{N} = \{1,2,\dots,6 \}$, 
the initial state $x_{i0}$ is given by
$x_{i0} = \sin(i)$. 
Since
\[
\max_{i\in \mathcal{N}}
\left|
x_{i0}  - \frac{1}{N} \sum_{j\in \mathcal{N}} x_{j0} 
\right| \leq 0.95,
\]
a bound $E_0$ in Assumption~\ref{assump:initial_bound}
is chosen as
$E_0 = 1$.
We set 
\[
\gamma = \lambda_2(L) = 1
\] 
and then 
numerically compute $\Gamma_{\infty} = 5/3$, where
$\Gamma_{\infty}$ is defined by \eqref{eq:Gamma_inf_def}.

The threshold $\delta_i$ and 
the upper bound $\tau_{\max}^i$ of inter-event times 
for the triggering mechanism \eqref{eq:STM}
are given by 
\begin{align*}
\delta_i = 
\begin{cases}
0.04  & \text{if $i=1,6$} \\
0.09 & \text{otherwise},
\end{cases}\qquad 
\tau_{\max}^i = 
\begin{cases}
1  & \text{if $i=1,6$} \\
1.5 & \text{otherwise},
\end{cases}
\end{align*}
respectively. The reason why agents $1$ and $6$ have
smaller thresholds and upper bounds of inter-event times 
is that 
these agents have more neighbors than others. 
For these thresholds, the minimum odd number $R$
satisfying the condition
\eqref{eq:threshold_q_level_cond} for all $i \in \mathcal{N}$
is $13$.
By Theorem~\ref{thm:main_result}, if the number $R$ 
of quantization levels is odd and satisfies $R \geq 13$, then
the multi-agent system achieves consensus exponentially
for a suitable decay parameter $\omega$ of the quantization range $E(t)$.
We use $R = 19$ for the simulation below.
Then $R_0 \in \mathbb{N}_0$ with $R = 2R_0 + 1$ is given by $R_0 = 9$.
When
each agent knows 
\[
\tilde d = 3
\]
as a bound of 
the number of neighbors, as stated 
in Assumption~\ref{assump:ni_bound},
the number of quantization levels for the transmission of
the sum 
of the relative states
is 
\[
2 \tilde d R_0 + 1= 55,
\] 
which can be represented by $6$ bits.
Under this setting of the parameters $\gamma$, $\delta_i$, $\tau_{\max}^i$,
and $R$, 
the maximum decay parameter $\tilde \omega$, 
which is defined as in \eqref{eq:tilde_omega_cond},
is given by 
\[
\tilde \omega = 0.2145.
\] 
In the simulation, 
we set $\omega = \tilde \omega$.

Using the Lambert $W$-function, we can compute
a lower bound $\tilde \tau_{\min}^i$
of inter-event times by \eqref{eq:tau_min_def}:
\begin{align*}
\tilde \tau_{\min}^i = 
\begin{cases}
2.192 \times 10^{-3}  & \text{if $i=1,6$} \\
8.574 \times 10^{-3} & \text{otherwise}.
\end{cases}
\end{align*}
Note, however, that these lower bounds are not used
for the real-time computation of inter-event times, because
all candidates of the inter-event times 
computed by the agents are greater than or equal to
these lower bounds as shown in Section~\ref{sec:self_triggered_computation}.

The state trajectory and the corresponding sampling times of each
agent are shown in Figures~\ref{fig:state_trajectory} and \ref{fig:sampling_time}, respectively, where 
the simulation time is $16$ and the time step is $10^{-4}$.
From Figure~\ref{fig:state_trajectory}, we see that the deviation of each state from the average state
converges to zero.
Figure~\ref{fig:sampling_time} shows that 
sampling occurs frequently on the interval $[0,1]$ but less
frequently on the interval $[1,16]$.
Agent~$3$ measures relative states more frequently 
on the interval $[4,7]$ than on other intervals.
This is because the state of agent~$3$ oscillates due to
coarse quantization. Such oscillations can be observed also for other agents, e.g., 
agent~$1$ on the interval $[3,4]$.
Moreover, we find in Figure~\ref{fig:state_trajectory} that the states of
agents~$2$ and $5$ do not change on the intervals $[2,4]$ and $[2,7]$, respectively. This is also caused by coarse quantization. In fact,
the quantized values of their relative state measurements
are zero on these intervals.
However, 
the proposed algorithm ensures that
the quantization errors exponentially converge to zero, and hence
the multi-agent system achieves asymptotic consensus.

\begin{figure}[!t]
	\centering
	\includegraphics[width = 8.5cm]{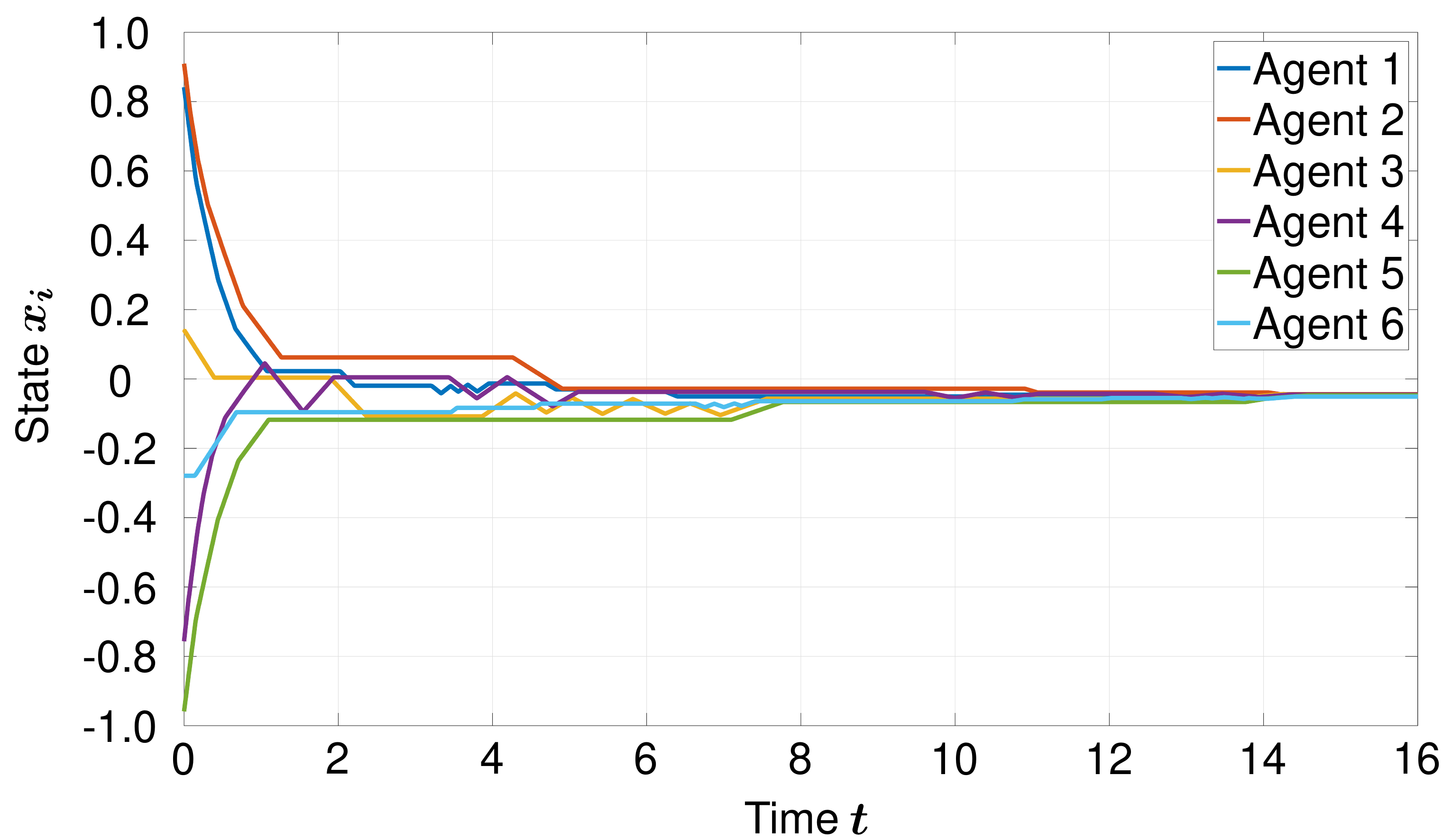}
	\caption{State trajectories.}
	\label{fig:state_trajectory}
\end{figure}

\begin{figure}[!t]
	\centering
	\includegraphics[width = 8.5cm]{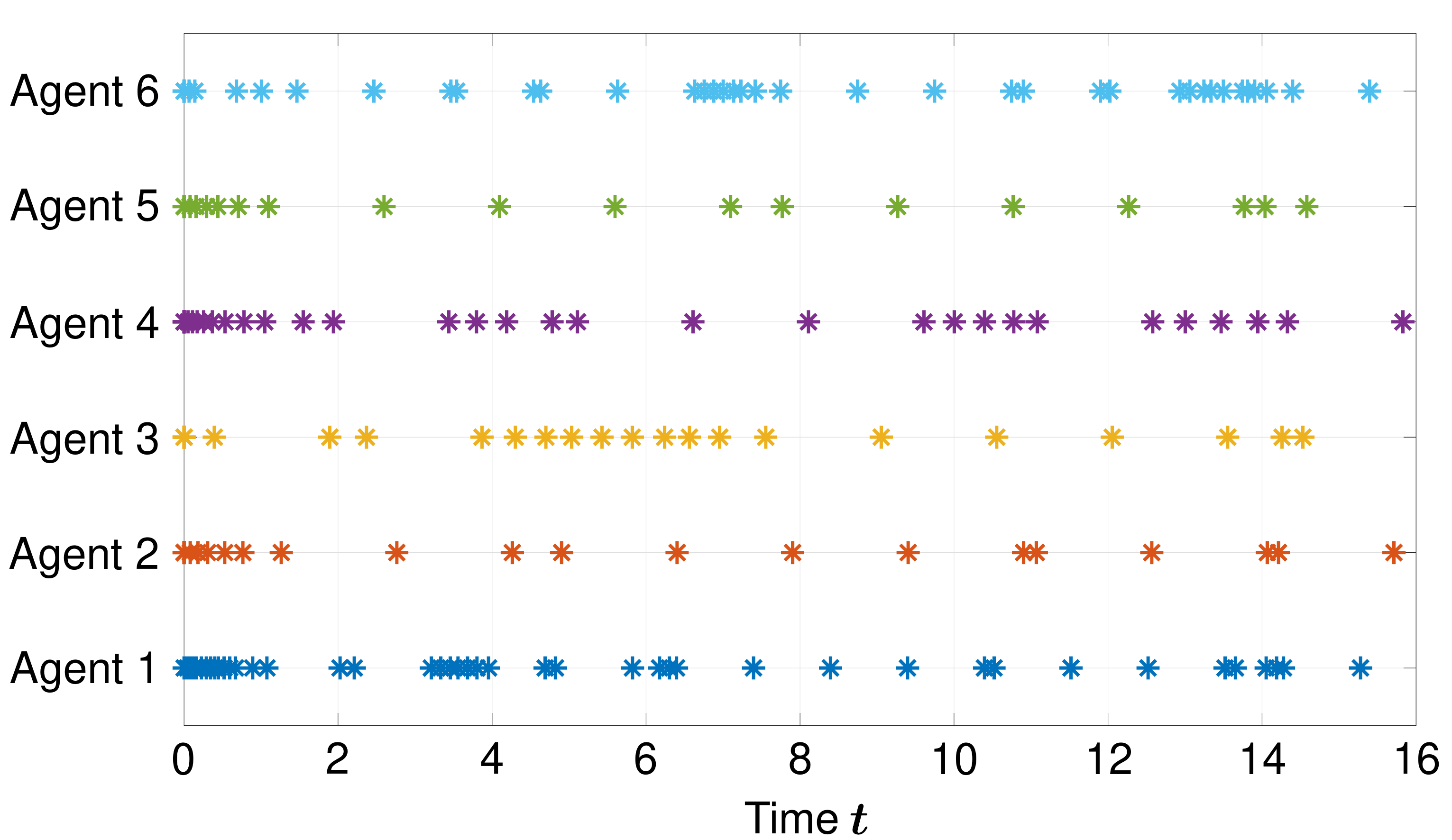}
	\caption{Sampling times.}
	\label{fig:sampling_time}
\end{figure}

\section{Conclusion}
\label{sec:conclusion}
We have proposed a joint design method of a finite-level dynamic quantizer
and a self-triggering mechanism
for asymptotic consensus
by relative state information.
The inter-event times are bounded from below by 
a strictly positive constant, and the sampling times
can be computed efficiently by using  the Lambert $W$-function.
The quantizer has been designed so that 
saturation is avoided and 
quantization errors exponentially converge to zero.
The new semi-norm introduced for the consensus analysis 
is constructed based on the maximum norm,
and the matrix exponential of the negative Laplacian matrix 
has the semi-contractivity property with respect to the semi-norm.
Future work will focus on 
extending the proposed method
to the case of directed graphs and agents with
high-order dynamics.

\section{Acknowledgments}
This work was supported by JSPS KAKENHI Grant Number
JP20K14362.

\appendix
\section*{Appendix A: Proof of Proposition~\ref{prop:Gamma_inf_bound}}
Let $0< \gamma \leq \lambda_2(L)$, and
let $\Lambda_0$, $\Lambda$, $V_0$, and $V$ 
be as in the proof of 
Lemma~\ref{lem:Ptau_bound}.

a) 
The inequality 
\[
2-\frac{2}{N} \leq \Gamma_{\infty}(\gamma)
\]
has already been proved in
\eqref{eq:Gamma_inf_bound}.
It remains to show that
\[
\Gamma_{\infty}(\gamma) \leq N-1.
\]
Since $V_0 \in \mathbb{R}^{N \times N}$ 
is orthogonal, 
we have $\|V_0\|_{\infty} \leq \sqrt{N}$.
Hence,
\[
\|V\|_{\infty} = \|V_0\|_{\infty} - \frac{1}{\sqrt{N}} \leq \sqrt{N} - \frac{1}{\sqrt{N}}.
\]
Moreover,
$
\|V^{\top}\|_{\infty}  \leq \|V_0^{\top}\|_{\infty}  = \sqrt{N}
$
and
\[
C \coloneqq \sup_{t \geq 0}\| e^{\gamma t}e^{-\Lambda t} \|_{\infty} \leq 1.
\]
Therefore, the inequality \eqref{eq:Gamma_bound} yields
\[
\Gamma_{\infty}(\gamma)
\leq C \|V\|_{\infty} \|V^{\top}\|_{\infty} 
\leq \left(
\sqrt{N} - \frac{1}{\sqrt{N}}
\right) \sqrt{N}  = N - 1.
\]

b)
Suppose that $G$ is a complete graph. Then
\[
\Lambda_0
= \diag (0,\,N,\,\cdots,\,N).
\]
If $0< \gamma \leq \lambda_2(L) = N$, then
\[
\|e^{\gamma t}(e^{-Lt} - 
\mathbf{1}\bar{\mathbf{1}})\|_{\infty} \leq 
\|e^{N t}(e^{-Lt} - 
\mathbf{1}\bar{\mathbf{1}})\|_{\infty}
\]
for all $t \geq 0$.
Hence, it suffices by a) to show that 
\begin{equation}
\label{eq:gam_n_bound}
\sup_{t \geq 0}\|e^{N t}(e^{-Lt} - 
\mathbf{1}\bar{\mathbf{1}})\|_{\infty} = 
2- \frac{2}{N}.
\tag{A1}
\end{equation}

Using 
$
L = V_0\Lambda_0V_0^{\top}
$
and
\begin{equation}
\label{eq:one_one}
\mathbf{1}\bar{\mathbf{1}} = 
V_0 \,\diag (1,\, 0,\, \cdots,\, 0) \, V_0^{\top},
\tag{A2}
\end{equation}
we obtain
\begin{align*}
e^{N t}(e^{-Lt} - 
\mathbf{1}\bar{\mathbf{1}}) 
&=
V_0\left(e^{Nt} e^{-\Lambda_0 t}-\diag (e^{Nt},\,0,\,\cdots,\, 0)\right)V_0^{\top} \\
&=V_0\, \diag (0,\,1,\,\cdots,\, 1) \, V_0^{\top}
\end{align*}
for all $t \geq 0$.
Moreover, \eqref{eq:one_one} yields
\begin{align*}
V_0 \, \diag (0,\,1,\,\cdots,\, 1)\, V_0^{\top} &=
V_0 V_0^{\top} - V_0\,  \diag (1,\, 0,\, \cdots,\, 0) \, V_0^{\top} \\
&= 
I - \mathbf{1}\bar{\mathbf{1}}.
\end{align*}
Thus, \eqref{eq:gam_n_bound} holds by $\|I - \mathbf{1}\bar{\mathbf{1}}\|_{\infty}
= 2-2/N$. \qed

\section*{Appendix B: Proof of Proposition~\ref{prop:W1_case}}
Define the function $H$ by
\[
H(\tau) \coloneqq \tau + be^{-\tau} - c,\quad \tau \in \mathbb{R}.
\]
Then 
\[
-\tau+c \geq b{e^{-\tau}} \quad \Leftrightarrow \quad 
H(\tau) \leq 0.
\]
Since
\[
H'(\tau) = 1 - be^{-\tau},
\]
it follows that $H'(\tau) = 0$ holds at $\tau = \log b$.
From the assumption $c < b$, we have $H(0) > 0$.
Therefore, there exists $\tau >0$ such that $H(\tau) \leq 0$ if and only if
\begin{equation}
\label{eq:bc_cond}
\log b>0 \quad \text{and} \quad H(\log b) \leq 0.
\tag{B1}
\end{equation}
Since
\[
H(\log b) = \log b + 1 - c,
\]
it follows that \eqref{eq:bc_cond} is equivalent to
\begin{equation}
\label{eq:bc_cond2}
1 < b \leq  e^{-1+c}.
\tag{B2}
\end{equation}
Hence,
\[
\inf \{
\tau >0:
-\tau + c \geq  b e^{-\omega \tau}\}  = \infty
\]
if \eqref{eq:bc_cond2} does not hold.

The inequality $-\tau+c \geq be^{-\tau}$ can be written as
\begin{equation}
\label{eq:bc_W_eq}
(\tau-c)e^{\tau-c} \leq -be^{- c}.
\tag{B3}
\end{equation}
Let 
$W_0$ and $W_{-1}$ be
the primary and secondary branch of the Lambert $W$-function, respectively. 
In other words, $W_0(y)$ and $W_{-1}(y)$ are
the solutions $x = x_0 \in [-1,0)$ 
and $x = x_{-1} \in (-\infty,-1]$ 
of the equation $xe^{x} = y$ for $y \in [-e^{-1},0)$, respectively.
For each $y \in [-e^{-1},0)$,
\begin{equation}
\label{eq:lambert_eqiv}
xe^{x} \leq y
\quad 
\Leftrightarrow
\quad 
W_{-1}(y) \leq x \leq W_{0}(y);
\tag{B4}
\end{equation}
see, e.g., \cite{Corless1996}.

Suppose that the condition \eqref{eq:bc_cond2} holds.
The expression \eqref{eq:bc_W_eq} 
and the equivalence \eqref{eq:lambert_eqiv}
show that 
$-\tau+c \geq be^{-\tau}$ if and only if
\[
W_{-1}(-be^{-c}) + c \leq \tau \leq W_{0}(-be^{- c}) + c.
\]
Note that $W_{-1}(-be^{-c})+c$ and $W_{0}(-be^{- c}) + c$
are the solutions of the equation $H(\tau) = 0$.
Since $H(0) > 0$,  both solutions are positive.
Thus,
\[
\inf \{
\tau >0:
-\tau + c \geq b e^{-\tau}\}  = W_{-1}(-be^{-c}) + c 
\]
is obtained.
\qed

\end{document}